\definecolor{darkred}  {rgb}{0.5,0,0}
\definecolor{darkblue} {rgb}{0,0,0.5}
\definecolor{darkgreen}{rgb}{0,0.5,0}
\theoremstyle{plain}
\newtheorem{theorem}{Theorem}
\newtheorem{lemma}{Lemma}
\newtheorem{corollary}{Corollary}
\theoremstyle{definition}
\newtheorem{remark}{Remark}
\renewcommand{\leq}{\leqslant}
\newcommand{\Nat}{\mathbb{N}}
\newcommand{\Int}{\mathbb{Z}}
\newcommand{\Real}{\mathbb{R}}
\newcommand{\Comp}{\mathbb{C}}
\DeclareMathOperator{\Tr}{Tr}
\DeclarePairedDelimiter{\abs}{\lvert}{\rvert}
\DeclarePairedDelimiterX{\innerp}[2]{\langle}{\rangle}{#1, #2}
\DeclarePairedDelimiter{\bra}{\langle}{\rvert}
\DeclarePairedDelimiter{\ket}{\lvert}{\rangle}
\DeclarePairedDelimiterX{\braket}[2]{\langle}{\rangle}{#1 \delimsize\vert #2}
\DeclarePairedDelimiterX{\ketbra}[2]{\lvert}{\rvert}{#1 \delimsize\rangle\!\delimsize\langle #2}
\DeclarePairedDelimiterX{\proj}[1]{\lvert}{\rvert}{#1 \delimsize\rangle\!\delimsize\langle #1}
\newcommand{\NOT}{\mathrm{NOT}}
\newcommand{\CNOT}{C\NOT}
\newcommand{\Hil}{\mathcal{H}}
\newcommand{\Bounded}{\mathfrak{B}}
\newcommand{\TrClass}{\mathfrak{T}}
\newcommand{\Unitary}{\mathfrak{U}}
\newcommand{\Id}{\mathrm{Id}}
\newcommand{\Deph}{\mathcal{D}}
\newcommand{\PauliGroup}{\mathcal{P}}
\newcommand{\CliffordGroup}{\mathcal{C}\!\ell}
\newcommand{\PStab}{\mathrm{PStab}}
\newcommand{\Stab}{\mathrm{Stab}}
\newcommand{\MixStab}{\mathrm{MixStab}}
\begin{document}

%----------------------------------------------------------------------------------------
%  TITLE AND ABSTRACT
%----------------------------------------------------------------------------------------
\title{Characterization of non-adaptive Clifford channels}

\author{Vsevolod I. Yashin}
\email{yashin.vi@mi-ras.ru}
\affiliation{Steklov Mathematical Institute of Russian Academy of Sciences, Moscow 119991, Russia}
\affiliation{Russian Quantum Center, Skolkovo, Moscow 143025, Russia}

\author{Maria A. Elovenkova}
\affiliation{Steklov Mathematical Institute of Russian Academy of Sciences, Moscow 119991, Russia}
\affiliation{Moscow Institute of Physics and Technology, Dolgoprudny 141700, Russia}

\date{\today}

\begin{abstract}
  Stabilizer circuits arise in almost every area of quantum computation and communication, so there is interest in studying them from an information-theoretic perspective, i.e.\ as quantum channels. We consider several natural approaches to what can be called a Clifford channel: the channel that can be realised by a stabilizer circuit without classical control, the channel that sends pure stabilizer states to mixed stabilizer states, the channel with stabilizer Choi state, the channel whose Stinespring dilation can have a Clifford unitary. We show the equivalence of these definitions. Up to unitary encoding and decoding maps any Clifford channel is a product of stabilizer state preparations, qubit discardings, identity channels and full dephasing channels. This simple structure allows to compute information capacities of such channels.
\end{abstract}

\maketitle

%----------------------------------------------------------------------------------------
%  MAIN TEXT
%----------------------------------------------------------------------------------------
\section{Introduction} \label{sec:introduction}

Various protocols of quantum information processing are described by the use of quantum circuits \cite{Holevo_2019, Wilde_2017}. Most of such protocols extensively use stabilizer operations: introduction of qubits and preparations of the initial state $\ket{0}$, Clifford group gates $\{H,S,\CNOT\}$, discarding of qubits, single qubit measurements and classical data processing. Quantum circuits constructed only of Clifford operations are called stabilizer circuits \cite{Gottesman_1997}. This class of circuits is widely used in quantum computation theory, especially in error correction. It is therefore valuable to understand the properties of information transmission through stabilizer circuits.

A lot of stress in quantum Shannon theory is devoted to the study of Gaussian channels \cite{Weedbrook_2012, Eisert_2007, Giovannetti_2014}. There is an analogy between stabilizer circuits and linear optical circuits: stabilizer states can be understood as discrete analogues of Gaussian states, Clifford gates as discrete analogues of linear optical gates, qubit measurements as homodyne measurements of bosonic modes. That analogy becomes even more natural for odd-dimensional qudit stabilizer circuits, where there exists a developed theory of discrete Wigner functions \cite{Gross_2006, Gross_2006_2, Bu_2023}. Therefore studying stabilizer channels can provide insight into the structure of Gaussian channels.

A central result in stabilizer circuits theory is the Gottesman-Knill theorem \cite{Gottesman_1997, Gottesman_1998, Aaronson_2004}, which establishes that such circuits are efficiently classically simulated, thereby precluding any computational advantage over classical circuits. The strength of the simulation depends on the availability of classical control in the circuit \cite{Aaronson_2004, Van_den_Nest_2010}. Stabilizer circuits with allowed classical control (adaptive circuits) can be simulated weakly, i.e.\ one can efficiently sample measurement outcomes. At the same time, for circuits without adaptivity it is also possible to perform strong simulation, i.e.\ to explicitly find probabilities of chosen outcomes \cite{Jozsa_2013, Koh_2017}.

Recently, a number of works \cite{Bennink_2017, Seddon_2019, Seddon_2021, Howard_2017, Heimendahl_2022} have considered quantum channels representable by adaptive stabilizer circuits, which we call adaptive Clifford channels. As mentioned, such channels can be efficiently weakly simulated. Also, one can decompose any quantum channel as a sum of adaptive Clifford channels, allowing for weak simulation of arbitrary quantum circuits. The difficulty of the simulation depends on the amount of magic (also called non-stabilizerness) in the circuit \cite{Bravyi_2005, Bravyi_2016, Bravyi_2016_2, Bravyi_2019, Seddon_2021}. However, the careful analysis of stabilizer circuits without classical control was lacking. And, generally, there seems to be some confusion in various definitions and namings for classes of channels realised by stabilizer circuits.

In this work, we endeavour to narrow this knowledge gap and to put things in order. In essence, we show that channels described by stabilizer circuits without classical control (or with restricted class of, see Remark~\ref{rmk:allowed_control}), which we call Clifford channels, have especially simple structure and information-theoretic properties. Most importantly, we carefully prove that several natural definitions equivalently define Clifford channels. It can be summarised in the following Theorem:
\begin{theorem}[Main result] \label{thm:main_result}
  Let $\Phi : \TrClass(\Hil_A) \to \TrClass(\Hil_B)$ be a multiqubit quantum channel. The following statements are equivalent:
  \begin{enumerate}
    \item $\Phi$ can be realized by a stabilizer circuit without classical control.
    \item $\Phi$ sends pure stabilizer states to mixed stabilizer states.
    \item $\Id_E\otimes\Phi$ for any qubit environment $E$ sends mixed stabilizer states to mixed stabilizer states.
    \item Choi state $\sigma_\Phi$ of the channel $\Phi$ is mixed stabilizer.
    \item $\Phi$ admits a Stinespring dilation with a Clifford unitary.
    \item Dual channel $\Phi^*$ maps Pauli observables to Pauli obvervables or zero.
  \end{enumerate}
  If any of these properties is satisfied, we call $\Phi$ a Clifford channel.
\end{theorem}

\noindent The proof of this result is distributed to a number of Theorems and Lemmata throughout the paper. Knowing this characterization, one finds that up to Clifford unitary equivalence any Clifford channel is constructed of: identity channels, dephasing channels, state $\ket{0}$ and chaotic state $\chi$ preparations, and qubit discardings \cite{Looi_2011}. The results also allow to better understand the structure of classical-to-quantum stabilizer state preparation channels, Pauli measurement channels and affine classical channels.

The paper is structured as follows. In Section~\ref{sec:preliminaries} we establish the notation and give necessary preliminaries about stabilizer circuits. In Section~\ref{sec:characterization} we prove some characterization theorems for classes of unitary, isometric and arbitrary non-adaptive Clifford channels. In Section~\ref{sec:normal_form} we discuss how any Clifford channel can be reduced to a certain normal form, which makes it possible to compute its information capacities. In Section~\ref{sec:stabilizer_preserving} we further investigate stabilizer preserving properties of Clifford channels and complete the proof of the main result. Finally, in Section~\ref{sec:conlusion} we give a brief conclusion and elaborate on possible future research.

\section{Preliminaries and notation} \label{sec:preliminaries}

In this Section, we fix the notation and briefly recall some results of quantum channels \cite{Holevo_2019, Wilde_2017} and stabilizer circuits theory \cite{Gottesman_1997, Gottesman_1998, Aaronson_2004, Garcia_2014}.

\subsection{Quantum channels}

We will denote $\Hil$ or $\Hil_A$ Hilbert spaces over a quantum system $A$, $\TrClass(\Hil)$ the space of trace-class operators on $\Hil$ and $\Bounded(\Hil)$ the space of bounded operators on $\Hil$. We will denote the group of unitaries over $\Hil$ as $\Unitary(\Hil)$. By a \emph{quantum channel} $\Phi : \TrClass(\Hil_A) \to \TrClass(\Hil_B)$ we mean a completely positive trace-preserving linear map between systems $A$ and $B$, we can also denote it $\Phi_{A\to B}$ for clarity. For each channel $\Phi$ one defines a \emph{dual channel} $\Phi^* : \Bounded(\Hil_B) \to \Bounded(\Hil_A)$ by the relation
\begin{equation}
  \Tr(\Phi[\rho] B) = \Tr(\rho \, \Phi^*[B])
\end{equation}
for all states $\rho\in\TrClass(\Hil_A)$ and observables $B\in\Bounded(\Hil_B)$. Dual channel $\Phi^*$ is a unit preserving completely positive map.

We denote the identity channel as $\Id : \TrClass(\Hil) \to \TrClass(\Hil)$. \emph{Unitary channels} are the channels of the form $\Phi[\rho] = U \rho U^\dag$, where $U\in\Unitary(\Hil)$ is a unitary. We call \emph{isometry channels} the maps of the form $\Phi[\rho] = V \rho V^\dag$, where $V : \Hil_A \to \Hil_B$ is some isometry $V^\dag V = I_A$ which embeds $\Hil_A$ into a larger Hilbert space $\Hil_B$. When $\Hil_B = \Hil_A\otimes\Hil_E$, each isometry $V : \Hil_A \to \Hil_B$ can be expressed as
\begin{equation}
  V \ket{\psi}_A = U\, \ket{\psi}_A\ket{0}_E
\end{equation}
for some unitary $U\in\Unitary(\Hil_B)$. By Stinespring's dilation theorem, each quantum channel $\Phi : \TrClass(\Hil_A) \to \TrClass(\Hil_B)$ can be purified to the form $\Phi[\rho] = \Tr_E(V\rho V^\dag)$ for some isometry $V : \Hil_A \to \Hil_{BE}$.

Given a superoperator $\Phi : \TrClass(\Hil_A) \to \TrClass(\Hil_B)$ and the maximally entangled pure state $\Omega_{A'A} \in \TrClass(\Hil_{A'A})$ on a bipartite system $A'|A$, where $A'$ is a copy of $A$, the \emph{Choi state} $\sigma$ is defined as
\begin{equation}
  \sigma_{A'B} = \Id_{A'}\otimes\Phi_{A\to B}[\Omega_{A'A}].
\end{equation}
It is an operator on the bipartite system $A'|B$. \emph{Choi-Jamio{\l}kowsky duality} states that the superoperator $\Phi$ is completely positive if and only if $\sigma$ is positive, and $\Phi$ is trace-preserving if and only if $\Tr_B\sigma$ is the maximally mixed state on $A'$.

\subsection{Quantum circuits}

Hilbert spaces of $n$-qubit systems will be denoted $\Hil^n = (\Comp^2)^{\otimes n}$. More generally, the upper index will denote $n$-th tensor power and $\abs{A} = n$ will mean that $A$ is an $n$-qubit system.\footnote{Even though the results hold for qudit systems of any prime local dimensions, we restrict to qubit case, which is the most important.} We will denote the computational basis as $\{\ket{x}\}_x$ where $x\in\Int_2^n$ are bit strings. States on a system of $n$ classical bits are represented by probability distributions from $\ell_1(\Int_2^n)$, and can be understood as diagonal operators in $\TrClass(\Hil^n)$.

By a \emph{quantum circuit} we mean some diagram assembled of qubit wires, clasical bit wires, and a sequence of elementary operations on them. Let us list these operations. First of all, one can introduce additional qubits and prepare them to initial state $\ket{0}$. Also, we allow preparing \emph{maximally mixed (chaotic) states} $\chi = \frac{1}{2} I$. One can act on multiple wires by multiqubit gates $U$ from some dictionary. We will consider the operation of discarding qubits $\rho \mapsto \Tr(\rho)$, which we also call \emph{erasure channel} $\Tr$. It can be treated as the unique channel into $0$-qubit system $\Hil^0 = \Comp$. We suppose a qubit can be (destructively) measured in $Z$-basis to obtain one classical bit. We will call a circuit \emph{unitary} if it consists only of unitary gates, and \emph{isomety circuit} if it consists of initial state $\ket{0}$ preparations and unitary gates.

The classical data can be processed via classical operations: preparations of classical bits in states $0$ and $1$ or uniformly random bits, actions of Boolean gates, bit discardings and classical readouts. A classical circuit is called \emph{affine} \cite{Patel_2003} if it consists of only Boolean gates describing addition, such as $\NOT$ and $\CNOT$. It is possible to change the behaviour of qubit operations via classical control depending on bit state. Circuits with classical control are called \emph{adaptive} \cite{Terhal_2004}. We will mainly be interested in \emph{non-adaptive} circuits without any classical processing. At the same time, one can get rid of adaptivity using \emph{deferred measurement principle} \cite{Nielsen_2010} (also see Remark~\ref{rmk:allowed_control}):
\begin{equation}
  \begin{quantikz}[wire types={q,q}, row sep={1.0cm,between origins}, column sep={0.8cm,between origins}, align equals at=1.5]
    &\meter{} &[0.2cm]\ctrl{0}\wire[d][1]{c}\setwiretype{c} & \\
    &\meter{} &\targ{}\setwiretype{c}                       &
  \end{quantikz}
  \quad = \quad
  \begin{quantikz}[wire types={q,q}, row sep={1.0cm,between origins}, column sep={0.8cm,between origins}, align equals at=1.5]
    &\meter{} &[0.2cm]\ctrl{0}\wire[d][1]{c}\setwiretype{c} &[0.2cm]  &                \\
    &           &\gate{X}                                     &\meter{} &\setwiretype{c}
  \end{quantikz}
  \quad = \quad
  \begin{quantikz}[wire types={q,q}, row sep={1.0cm,between origins}, column sep={0.8cm,between origins}, align equals at=1.5]
    &\ctrl{1} &[0.2cm]\meter{} &\setwiretype{c} \\
    &\targ{}  &\meter{}        &\setwiretype{c}
  \end{quantikz}
\end{equation}

Other possible circuit elements can be composed of the listed operations. Action of a circuit on inputs is described by some quantum channel $\Phi$. In that case a quantum channel $\Phi$ is \emph{realised} (or \emph{represented}) by this circuit. Circuits representing the same channel are called \emph{equivalent}. By Stinespring's dilation theorem, any channel $\Phi : \TrClass(\Hil_A) \to \TrClass(\Hil_B)$ can be represented by some circuit of form
\begin{equation}
  \begin{quantikz}[wire types={q}, column sep={0.6cm,between origins}, align equals at=1.0]
    \lstick{$A$} &\gate{\Phi} &\rstick{$B$}
  \end{quantikz}
  \; = \;
  \begin{quantikz}[wire types={n,q}, row sep={0.8cm,between origins}, column sep={0.65cm,between origins}, align equals at=1.5]
    \lstick{$A'$} &\lstick{$\ket{0}$} &[0.4cm]\gate[2][1.5cm]{U}\setwiretype{q} &[0.5cm]\ground{} &\rstick{$B'$}\setwiretype{n} \\
    \lstick{$A$}  &                   &                                         &                 &\rstick{$B$}
  \end{quantikz}
  ,
\end{equation}
where $A'$, $B'$ are some environments and $U : \Hil_{AA'}\to\Hil_{BB'}$ is a unitary operation.

\subsection{Stabilizer formalism}

An $n$-qubit \emph{Pauli group} $\PauliGroup^n$ is a group of unitary operators which are tensor products of one-qubit Pauli matrices $I,X,Y,Z$ with possible phase $i$:
\begin{equation}
  \PauliGroup^n = \langle iI, Z_1, X_1, \dots, Z_n, X_n \rangle.
\end{equation}
Hermitian elements of Pauli group are called \emph{Pauli observables}. We will call a Pauli observable $P\in\PauliGroup^n$ \emph{non-trivial} if $P\neq \pm I$. The set of non-signed Pauli matrices constitutes an orthonormal basis in the space of matrices with Hilbert-Schmidt inner product $(X,Y)\mapsto \frac{1}{2^n}\Tr(X^\dag Y)$.

Unitaries that preserve Pauli group are called \emph{Clifford}, and the set of such unitaries is called \emph{Clifford group} $\CliffordGroup^n$:
\begin{equation}
  \CliffordGroup^n = \{ U\in\Unitary(\Hil^n) : \text{if }P\in\PauliGroup^n\text{ then }U P U^\dag\in\PauliGroup^n \}.
\end{equation}
Clifford group is generated by one-qubit gates $\{S, H\}$ and two-qubit gates $\CNOT$, while \emph{diagonal} Clifford unitaries are generated by gates $S$ and $CZ$. The transposition $U^T$ of a Clifford unitary $U\in\CliffordGroup^n$ is again Clifford. The structure of Pauli and Clifford groups is closely related to symplectic geometry of $\Int_2^{2n}$ \cite{Calderbank_1997, Calderbank_1998, Gross_2006}, but we will not employ such results in this work.

A \emph{stabilizer group} $\mathcal{S} \subset \PauliGroup^n$ is a commutative group of Pauli observables such that $-I \notin \mathcal{S}$. Any stabilizer group is isomorphic to $\Int_2^k$, where $k$ is a \emph{rank} of $\mathcal{S}$ and $0\leq k\leq n$. Therefore any stabilizer group $\mathcal{S}$ has a set of \emph{generators} $\langle P_1,\dots,P_k\rangle$. The set of generators (i.e.\ signed Pauli strings of length $n$) written into a column is called a \emph{stabilizer tableau}. Stabilizer tableau is defined up to row operations: permutation of generators and multiplication of one generator by another.

If a stabilizer group $\mathcal{S}$ is maximal (i.e.\ $k=n$), then it defines a unique \emph{pure stabilizer state} $\ket{\psi}\in\Hil^n$ such that $P\ket{\psi}=\ket{\psi}$ for all $P\in\mathcal{S}$.\footnote{Note that we define pure stabilizer states and Clifford unitaries up to any global phase, while for some applications it is crucial to consider $\Int_8$ phases \cite{Bravyi_2016_2, Bravyi_2016}.} We will denote the set of $n$-qubit pure stabilizer states as $\PStab^n$. Important examples of pure stabilizer states include $Z$-basis states $\{\ket{0},\ket{1}\}$, $X$-basis states $\{\ket{+},\ket{-}\}$, maximally entangled Bell state $\ket{\mathrm{Bell}}$ and Greenberger–Horne–Zeilinger (GHZ) states $\ket{\mathrm{GHZ}}$:
\begin{equation}
  \ket{\mathrm{Bell}} = \frac{\ket{00}+\ket{11}}{\sqrt{2}}, \qquad
  \ket{\mathrm{GHZ}} = \frac{\ket{000}+\ket{111}}{\sqrt{2}}.
\end{equation}
We will also denote the maximally entangled state as $\Omega = \proj{\mathrm{Bell}}$. Any pure $n$-qubit stabilizer states has form $\ket{\psi} = U\ket{0^n}$ for some $U\in\CliffordGroup^n$. To any stabilizer state $\ket{\psi}$ corresponds some stabilizer tableau. For example, to the mentioned stabilizer states correspond the tableaux:
\begin{equation}
  \ket{0^n} \leadsto
  \begin{bmatrix}
    Z & I & \cdots & I  \\
	I & Z & \cdots & I   \\
	\vdots & \vdots & \ddots & \vdots \\
	I & I & \cdots & Z
  \end{bmatrix}
  ,
  \qquad
  \ket{\mathrm{Bell}} \leadsto
  \begin{bmatrix}
    Z & Z \\
    X & X
  \end{bmatrix}
  ,
  \qquad
  \ket{\mathrm{GHZ}} \leadsto
  \begin{bmatrix}
    Z & Z & I \\
    I & Z & Z \\
    X & X & X
  \end{bmatrix}
  .
\end{equation}

% Let us call two stabilizer states $\ket{\psi},\ket{\psi'}\in\PStab^n$ \emph{similar} \cite{Garcia_2014} if they are equivalent up to a Pauli operator: $\ket{\psi'}\propto P\ket{\psi}$ for some $P\in\PauliGroup^n$. In terms of stabilizer tableaux, that means two states may be expressed using equal Pauli generators up to different $\pm 1$ signs. Also, any stabilizer state $\ket{\psi}\in\PStab^n$ has \emph{quadratic form expansion} \cite{Dehaene_2003, Van_den_Nest_2010, Bravyi_2016, Beaudrap_2022}
% \begin{equation}
%   \ket{\psi} \propto \frac{1}{2^{k/2}}\sum_{x\in L} i^{q(x)} \ket{x},
% \end{equation}
% where $L\subseteq \Int_2^n$ is a $k$-dimensional affine subspace and $q : L \rightarrow \Int_4$ is a quadratic form (with linear term).

More generally, arbitrary stabilizer group $\mathcal{S}$ with generators $\{P_1,\dots,P_k\}$ defines a \emph{mixed stabilizer state} $\rho \in \TrClass(\Hil^n)$ \cite{Aaronson_2004, Fattal_2004, Audenaert_2005} by the formula
\begin{equation} \label{eq:mixed_stabilizer}
  \rho = \frac{1}{2^n} \sum_{P\in\mathcal{S}} P = \frac{1}{2^n} (I+P_1)\cdots(I+P_k).
\end{equation}
An important example is the mixed version of the Bell state, which is the partial trace of $\ket{\mathrm{GHZ}}$ by third qubit:
\begin{equation}
  \Sigma
  = \Tr_3\proj{\mathrm{GHZ}}
  = \frac{\proj{00} + \proj{11}}{2}
  \quad \leadsto \quad
  \begin{bmatrix}
    Z & Z
  \end{bmatrix}
  .
\end{equation}
We will call this state \emph{maximally correlated state} for it exhibits classical correlations but is not entangled. Any \emph{classical (diagonal)} mixed stabilizer state corresponds to a uniform distribution over some $k$-dimensional affine subspace $L$ of $\Int_2^n$:
\begin{equation}
  \rho = \frac{1}{2^k} \sum_{x\in L} \proj{x}.
\end{equation}
We will denote the set of $n$-qubit mixed stabilizer states as $\Stab^n$, and if we say a state $\rho$ is stabilizer, we mean it can be mixed. Mixed stabilizer states can be written as $\rho = U (\proj{0}^k\otimes\chi^{n-k}) U^\dag$ for some Clifford unitary $U\in\CliffordGroup^n$. It follows that mixed stabilizer states can be purified $\rho = \Tr_E\proj{\psi}$ to a pure stabilizer state $\ket{\psi}\in\PStab^{2n-k}$, and such minimal purifications are Clifford unitary equivalent.
% We call two mixed stabilizer states $\rho, \rho' \in \Stab^n$ \emph{similar} if $\rho' = P\rho P^\dag$ for some $P\in\PauliGroup^n$.
It is important not to confuse mixed stabilizer states with \emph{mixtures} of stabilizer states, which are arbitrary convex combinations of pure stabilizer states.

The class of mixed stabilizer states is closed under tensor products and partial traces. Let us consider a bipartite qubit system $A|B$. Suppose $\rho\in\Stab_{AB}$ has stabilizer group $\mathcal{S}_{AB}$. Then the partial trace over $B$ gives
\begin{equation}
  \rho_A
  = \Tr_B\rho_{AB} \\
  = \frac{1}{2^n} \sum_{P_{AB}\in\mathcal{S}_{AB}} \Tr_B P_{AB} \\
  = \frac{1}{2^n} \sum_{P_A\in\mathcal{S}_A} P_A,
\end{equation}
where $\mathcal{S}_A\subset \PauliGroup_A$ is a stabilizer group corresponding to Pauli observables from $\mathcal{S}_{AB}$ that are trivial on $B$. It was shown in \cite{Fattal_2004} that any pure stabilizer state on a bipartite system can be brought to a product of local pure stabilizer states $\ket{0}$ and some number of Bell states $\ket{\mathrm{Bell}}$ (EPR pairs) using local Clifford unitaries. Also, in \cite{Bravyi_2006} it was shown that three-party pure stabilizer states are locally Clifford equivalent to a product of local states, Bell states between pairs of parties, and some number of three-qubit GHZ states $\ket{\mathrm{GHZ}}$. These results are used to characterise bipartite mixed stabilizer states, see Theorem~\ref{thm:bipartite_mixstab_states}.

\subsection{Stabilizer circuits}

By a \emph{non-adaptive stabilizer circuit} we understand a quantum circuit which is constructed of classical and Clifford elements: initial state $\ket{0}$ and chaotic state $\chi$ preparations, local Clifford gates, qubit discardings, local Pauli measurements, arbitrary classical Boolean operations. If we also allow classical control over Clifford elements, then we call a circuit \emph{adaptive}. Additionally, we will consider a one-qubit \emph{dephasing gate} $\Deph[\rho] = \frac{1}{2}(\rho + Z\rho Z)$, which can be expressed via stabilizer circuit as
\begin{equation}
  \begin{quantikz}[wire types={q}, column sep={0.8cm,between origins}, align equals at=1.0]
    &\gate{\Deph} &
  \end{quantikz}
  \quad = \quad
  \begin{quantikz}[wire types={n,q}, row sep={0.8cm,between origins}, column sep={0.8cm,between origins}, align equals at=1.5]
    &\lstick{$\ket{0}$} &\targ{}\setwiretype{q} &\ground{} &\setwiretype{n} \\
    &                   &\ctrl{-1}              &          &
  \end{quantikz}
  .
\end{equation}

Stabilizer circuits are widely used in error correction and are known to be classically simulable. More exactly, for adaptive stabilizer circuits the problem of \emph{weak simulation}, i.e.\ the problem of sampling from the outcome distribution, is efficiently solvable on classical computers using various approaches: stabilizer tableau methods \cite{Gottesman_1997, Gottesman_1998, Aaronson_2004, Gidney_2021}, graph state representations \cite{Anders_2006, Rijlaarsdam_2020}, quadratic form expansions \cite{Dehaene_2003, Van_den_Nest_2010, Bravyi_2016, Beaudrap_2022}, Wigner functions and other quasiprobability representations \cite{Gross_2006, Veitch_2012, Raussendorf_2020, Park_2023, Pashayan_2015, Delfosse_2015, Kulikov_2024}, hidden variable models \cite{Raussendorf_2017, Zurel_2020}. \emph{Strong simulation}, i.e.\ the problem of computing outcome probabilities, is computationally hard for general classical circuits and therefore for adaptive stabilizer circuits \cite{Van_den_Nest_2010, Jozsa_2013, Koh_2017}. However, for \emph{non-adaptive} Clifford circuits the strong simulation is efficient (also see Remark~\ref{rmk:allowed_control}), and this problem is reducible to computations on \emph{affine} Boolean circuits \cite{Aaronson_2004, Buhrman_2006}.

\section{Characterization of Clifford channels} \label{sec:characterization}

In this Section we discuss various definitions of Clifford channels: in terms of stabilizer preserving properties, Choi states and stabilizer circuit representability, etc. We then prove that part of these definitions are equivalent. Completing the characterization is left for Section~\ref{sec:stabilizer_preserving}.

\subsection{Definitions of Clifford channels}

As a starting point, let us agree to call \emph{Clifford channels} the channels representable by non-adaptive stabilizer circuits. Channels representable by adaptive stabilizer circuits we will call \emph{adaptive Clifford channels}. A simple example of a channel which is adaptive Clifford but not Clifford is given by stabilizer circuit
\begin{equation} \label{eq:example_of_adaptive}
  \begin{quantikz}[wire types={n,q}, row sep={1.0cm,between origins}, column sep={0.6cm,between origins}, align equals at=1.5]
    &\lstick{$\ket{0}$} &\targ{}\setwiretype{q} &[0.2cm]\meter{} &[0.2cm]\ctrl{0}\wire[d][1]{c}\setwiretype{c} &[0.4cm]\setwiretype{n} \\
    &                   &\ctrl{-1}              &                &\gate{H}                                     &
  \end{quantikz}
  .
\end{equation}
Another example is the non-destructive measurement of two-qubit observable $Z_1 + Z_2$, i.e.\ total angular momentum of two $1/2$-spins. It cannot be accomplished by measuring independent commuting Pauli observables, but it can be realised by first measuring parity observable $Z_1Z_2$, then measuring one of qubits if the outcome is $0$, or do nothing for outcome $1$.

In this work we will consider properties of non-adaptive Clifford channels. The structure of adaptive channels is much more complicated, for deeper discussion consult \cite{Bennink_2017, Seddon_2019, Heimendahl_2022}.
\begin{remark} \label{rmk:allowed_control}
  A restricted amount of adaptivity for stabilizer circuits can be allowed to still result in Clifford channels. More concretely, we can allow affine Boolean operations over classical wires with control over Pauli gates. Any stabilizer circuit of this class can be rewritten to non-adaptive stabilizer circuit: classical wires can be represented as \emph{dephased} quantum wires (e.g.\ insert dephasing gate $\Deph$ at each step); affine Boolean operations can be expressed by $\NOT$ and $\CNOT$ quantum gates; applying deferred measurement principle to classical control over Pauli gates results in quantum controlled-Pauli gates, which are Clifford. The procedure is very similar to the one used to construct coherent communication protocols \cite[Chapter 7]{Wilde_2017}. The importance of this class of circuits is also discussed in \cite{Kliuchnikov_2023}.
\end{remark}

We will say that a channel $\Phi$ has a \emph{Clifford dilation} \cite{Heimendahl_2022} if it has a Stinespring dilation with Clifford unitary part. Any Clifford dilation can be realised by a stabilizer circuit, so such channels are Clifford. The converse will be proven in this Section. Note that the channel in Equation~(\ref{eq:example_of_adaptive}) can as well be represented as
\begin{equation}
  \begin{quantikz}[wire types={n,q}, row sep={0.8cm,between origins}, column sep={0.8cm,between origins}, align equals at=1.5]
    &\lstick{$\ket{0}$} &\targ{}\setwiretype{q} &\ctrl{1} &\ground{} &\setwiretype{n} \\
    &                   &\ctrl{-1}              &\gate{H} &          &
  \end{quantikz}
  .
\end{equation}
The unitary part of this circuit lies in the $3$-rd level of Clifford hierarchy \cite{Gottesman_1999}. This suggests a definition to \emph{channels of $l$-th level in Clifford hierarchy} as channels having a unitary of $l$-th level in Stinespring dilation. It could be awarding to inverstigate the connection between adaptivity and Clifford hierarchy.

Another natural possibility to define Clifford channels is to rely on their stabilizer preserving properties.\footnote{It is quite often that stabilizer operations are defined as channels that map pure stabilizer states to pure stabilizer states (e.g.\ \cite{Gu_2023}), but we have not come across a careful justification for such definition.} We will call a channel $\Phi : \TrClass(\Hil_A) \to \TrClass(\Hil_B)$ \emph{stabilizer preserving} if it maps all pure stabilizer states $\ket{\psi}\in\PStab_A$ to mixed stabilizer states $\Phi[\proj{\psi}]\in\Stab_B$. Surely, channels representable by non-adaptive stabilizer circuits have this property. The channel $\Phi$ is \emph{completely stabilizer preserving} if for any qubit environment $E$ the amplification $\Id_E\otimes\Phi$ is stabilizer preserving. We will call a channel $\Phi$ \emph{pure stabilizer preserving} if it maps pure stabilizer states to pure stabilizer states, and \emph{completely pure stabilizer preserving} if this condition holds for all $\Id_E\otimes\Phi$. Also, we will call $\Phi$ \emph{adaptive stabilizer preserving} if it maps pure stabilizer states to mixtures of stabilizer states, and \emph{completely adaptive stabilizer preserving} likewise.

Let us compare these definitions. It is straightforward that pure stabilizer preserving channels are stabilizer preserving, which themselves are adaptive stabilizer preserving. If a channel is completely stabilizer preserving, then it also sends all mixed stabilizer states to mixed stabilizer states (consider the action on purifications). It is not obvious that usual stabilizer preserving channels map mixed stabilizer states to mixed stabilizer states. That does not hold for classical channels: the non-linear map $(x,y)\mapsto x\cdot y$ is deterministic, but maps the uniform distribution over $\Int_2^2$ to non-uniform distribution over $\Int_2$. Also, it was shown in \cite[Section 4]{Seddon_2019} that adaptive stabilizer preserving channels are not necessarily completely adaptive stabilizer preserving. Nevertheless, in Theorem~\ref{thm:stabilizer_preserving_are_clifford} we prove that stabilizer preserving channels are always completely stabilizer preserving.

Finally, since the properties of a channel $\Phi$ are tied to properties of it's Choi state $\sigma$, it would be natural to expect that Choi states of Clifford channels are stabilizer. The list of elementary operations and their respective Choi states is given in Table~\ref{tab:Choi_states}, for these elements the Choi states are indeed stabilizer.

\begin{table}[h]
  \caption{The table of elementary stabilizer operations from $A$ to $B$ and their Choi states on system $A'|B$.}
  \label{tab:Choi_states}
  \begin{ruledtabular}
  \begin{tabular}{ccc}
  Channel $\Phi_{A\to B}$& Circuit depiction & Choi state $\sigma_{A'B}$ \\
  \hline
  initial state $\ket{0}$ preparation & \begin{quantikz} \lstick{$\ket{0}$} & \end{quantikz}               & $\proj{0}_B$ \\
  chaotic state $\chi$ preparation    & \begin{quantikz} \lstick{$\chi$} & \end{quantikz}                  & $\chi_B$ \\
  erasure channel $\Tr$               & \begin{quantikz} &\ground{} \end{quantikz}                         & $\chi_{A'}$ \\
  identity channel $\Id$              & \begin{quantikz} & & \end{quantikz}                                & $\Omega_{A'B}$ \\
  dephasing channel $\Deph$           & \begin{quantikz}[column sep={0.3cm}] &\gate{\Deph} & \end{quantikz} & $\Sigma_{A'B}$ \\
  \end{tabular}
  \end{ruledtabular}
\end{table}

\subsection{Characterization theorems}

Let us prove that some of the introduced notions are equivalent. Similar characterization is well known for bosonic Gaussian channels \cite{Giedke_2002, Caruso_2008, Caruso_2011, Weedbrook_2012}. Here we use the notion of completely stabilizer preserving channels, but in Section~\ref{sec:stabilizer_preserving} we prove that the usual stabilizer preserving property suffices.

Let us start by characterising unitary channels.

\begin{theorem}[Unitary Clifford channels] \label{thm:unitary_characterization}
  Let $\Phi : \TrClass(\Hil_A) \to \TrClass(\Hil_A)$ be a channel. The following statements are equivalent:
  \begin{enumerate}
    \item The channel $\Phi$ is completely pure stabilizer preserving.
    \item The Choi state $\sigma$ is maximally entangled and pure stabilizer.
    \item The channel $\Phi$ may be realised by a unitary stabilizer circuit.
  \end{enumerate}
\end{theorem}

\noindent The theorem for unitary channels is a special case of isometry channels characterization:

\begin{theorem}[Isometry Clifford channels] \label{thm:isometry_characterization}
  Let $\Phi : \TrClass(\Hil_A) \to \TrClass(\Hil_B), \abs{A}\leq\abs{B}$ be a channel. The following statements are equivalent:
  \begin{enumerate}
    \item $\Phi$ is completely pure stabilizer preserving.
    \item Choi state $\sigma$ is pure stabilizer.
    \item $\Phi$ may be realised by an isometry stabilizer circuit.
  \end{enumerate}
\end{theorem}

\begin{proof}
  The directions $\emph{1}\Rightarrow\!\emph{2}$ and $\emph{3}\Rightarrow\!\emph{1}$ are straightforward. Let us prove $\emph{2}\Rightarrow\!\emph{3}$. Suppose $\sigma\in\PStab_{A'B}$ is a pure stabilizer Choi state on a bipartite system $A'|B$. By the result of \cite[Theorem~1]{Fattal_2004} there are some local Clifford unitaries $U_{A'}$ and $U_B$ that map $\sigma$ to a tensor product of local states $\ket{0}$ and a number of EPR pairs $\ket{\mathrm{Bell}}$. But since $\Tr_B\sigma = \chi_{A'}$ is maximally mixed, we conclude that
  \begin{equation}
    (U_{A'}\otimes U_B)\,\sigma\,(U_{A'}\otimes U_B)^\dag = \Omega^{\abs{A}}\otimes\proj{0}^{\abs{B}-\abs{A}}.
  \end{equation}
  This is the Choi state of a standard Clifford isometry $\tilde{V}:\Hil_A\to\Hil_B, \ket{\psi} \mapsto \ket{\psi}\ket{0}$ (see Table~\ref{tab:Choi_states}). Therefore the channel $\Phi$ can be represented as $\Phi[\rho] = V\rho V^\dag$ for $V = U_B \tilde{V} U_A$, where $U_A = U_{A'}^T$.
\end{proof}

Now, let us give a characterization to arbitrary non-adaptive Clifford channels.
\clearpage
\begin{theorem}[Clifford channels] \label{thm:non-adaptive_characterization}
  Let $\Phi : \TrClass(\Hil_A) \to \TrClass(\Hil_B)$ be a channel. The following statements are equivalent:
  \begin{enumerate}
    \item $\Phi$ is completely stabilizer preserving.
    \item Choi state $\sigma$ is mixed stabilizer.
    \item $\Phi$ may be realised by a non-adaptive stabilizer circuit.
  \end{enumerate}
\end{theorem}

\noindent Here we give a short proof of this Theorem~by constructiong a Clifford dilation of a channel. The proof will be strengthened in Section~\ref{sec:normal_form} to reveal the explicit structure of Clifford channels.

\begin{proof}
  As in the previous case, let us prove $\emph{2}\Rightarrow\!\emph{3}$. Suppose $\sigma\in\Stab_{A'B}$ is a mixed stabilizer Choi state on a bipartite system $A'|B$. Let us consider a minimal purification $\sigma_{A'BE} \in \PStab_{A'BE}$ to some environment $E$. Since $\Tr_{BE}\sigma_{A'BE} = \Tr_B\sigma = \chi_{A'}$, by Theorem~\ref{thm:unitary_characterization} the state $\sigma_{A'BE}$ corresponds to some isometry Clifford channel with $V : \Hil_A \to \Hil_{BE}$. Take $\Phi[\rho] = \Tr_E(V\rho V^\dag)$.
\end{proof}

A similar theorem was stated in \cite{Seddon_2019} while studying free operations for magic resources.
\begin{theorem}[\cite{Seddon_2019}] \label{thm:adaptive_characterization}
  Let $\Phi : \TrClass(\Hil_A) \to \TrClass(\Hil_B)$ be a channel. The following statements are equivalent:
  \begin{enumerate}
    \item $\Phi$ is completely adaptive stabilizer preserving.
    \item Choi state $\sigma$ is a mixture of stabilizer states.
  \end{enumerate}
\end{theorem}

\noindent One could conjecture that the part $\emph{3}$ also holds, i.e.\ these conditions imply the channel is adaptive Clifford. However, it turns out to be false: the authors of \cite{Heimendahl_2022} have explicitly shown that the class of channels that completely map stabilizer states to mixtures of stabilizer states is strictly larger than the class of channels represented by adaptive stabilizer circuits.

\section{Structure of Clifford channels} \label{sec:normal_form}

In this Section we further investigate the structure of circuits representing Clifford channels and find a simple normal form for such circuits. We discuss implications of the normal form to calculating information capacities of Clifford channels.

After releasing the first version of the preprint we found that the results of this Section were already stated in \cite[Section V]{Looi_2011} for prime local dimensions $d$. Nevertheless, we think it is useful to discuss it here, as this normal form seems to be under-recognised.

\subsection{Normal form of Clifford channels}

The proof of Theorem~\ref{thm:non-adaptive_characterization} relies on the analysis of mixed stabilizer Choi states. We can strengthen the analysis using the following Theorem:
\begin{theorem}[Remark from \cite{Bravyi_2006}] \label{thm:bipartite_mixstab_states}
  Every mixed stabilizer state $\rho\in\Stab_{AB}$ over bipartite system $A|B$ is locally Clifford equivalent to a product of chaotic states $\chi$ in both $A$ and $B$, local pure stabilizer states, maximally entangled states $\Omega$ and maxiamlly correlated states $\Sigma$.
\end{theorem}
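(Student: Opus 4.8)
The plan is to reduce the statement to the known classification of \emph{pure} tripartite stabilizer states by purifying $\rho$ and then tracing out the environment. First I would purify $\rho \in \Stab_{AB}$ to a pure stabilizer state $\ket{\psi}_{ABE} \in \PStab_{ABE}$ on a three-party system $A|B|E$; this is possible because every mixed stabilizer state admits a stabilizer purification, as recalled in the Preliminaries. The point of this move is that $\rho = \Tr_E \proj{\psi}$, so any decomposition of $\ket{\psi}$ that respects the tripartition $A|B|E$ will descend, after tracing out $E$, to a decomposition of $\rho$ respecting the bipartition $A|B$.

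The second step invokes the classification of three-party pure stabilizer states of \cite{Bravyi_2006}: there exist local Clifford unitaries $U_A$, $U_B$, $U_E$ such that, denoting $\ket{\psi_{\mathrm{canon}}} = (U_A \otimes U_B \otimes U_E)\ket{\psi}_{ABE}$, this state is a tensor product of local pure stabilizer states on single parties, Bell pairs across each of the three pairs $A|B$, $A|E$, $B|E$, and some number of GHZ states $\ket{\mathrm{GHZ}^3}$ shared by all three parties. Since the partial trace $\Tr_E$ is invariant under conjugation by a local unitary on $E$, that is, $\Tr_E[(I \otimes U_E)\, X\, (I \otimes U_E^\dag)] = \Tr_E X$, tracing out $E$ gives
\begin{equation}
  \rho = (U_A^\dag \otimes U_B^\dag)\, \Tr_E\!\proj{\psi_{\mathrm{canon}}}\, (U_A \otimes U_B),
\end{equation}
so that $\rho$ is locally Clifford equivalent to $\Tr_E \proj{\psi_{\mathrm{canon}}}$, and it remains only to compute this partial trace factor by factor.

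The final step reads off what each canonical factor becomes under $\Tr_E$. A local pure stabilizer state on $A$ or $B$ survives unchanged; a local pure stabilizer state on $E$ is erased and contributes only a scalar. An $A|B$ Bell pair is untouched by $\Tr_E$ and remains a maximally entangled state $\Omega$. An $A|E$ or $B|E$ Bell pair traces down to the chaotic state $\chi$ on its surviving qubit in $A$ or $B$. Finally, each tripartite GHZ state traces down to the maximally correlated state, since $\Sigma = \Tr_3 \proj{\mathrm{GHZ}^3}$ by definition. Collecting the survivors yields exactly a product of chaotic states in $A$ and $B$, local pure stabilizer states, maximally entangled states $\Omega$, and maximally correlated states $\Sigma$, which is the claimed normal form.

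I expect the only genuine subtlety to be bookkeeping around the environment: one must verify that the $U_E$ component of the tripartite normalisation is harmless after the partial trace (immediate from the unitary invariance of $\Tr_E$ noted above) and that using a non-minimal purification can introduce at most extra local stabilizer states on $E$, which vanish upon tracing, so no generality is lost in choosing the purification. The substantive content is imported wholesale from the tripartite classification of \cite{Bravyi_2006}; everything else is the routine translation of GHZ, Bell and local factors through the partial trace.
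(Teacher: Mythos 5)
Your proposal is correct and follows essentially the same route as the paper's own proof: purify $\rho$ to a tripartite pure stabilizer state, invoke the classification of three-party pure stabilizer states from \cite{Bravyi_2006}, and read off the partial trace over $E$ factor by factor ($\Tr_E\Omega_{AE}=\chi_A$, $\Tr_E\proj{\mathrm{GHZ}^3}_{ABE}=\Sigma_{AB}$, etc.). Your added bookkeeping --- the unitary invariance of $\Tr_E$ under $U_E$ and the harmlessness of a non-minimal purification --- is sound and simply makes explicit what the paper leaves implicit.
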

\begin{proof}
  Let us purify the state $\rho$ to some state $\ket{\psi}\in\PStab_{ABE}$ with environment $E$. Following \cite[Theorem~5]{Bravyi_2006}, the state $\ket{\psi}$ is locally Clifford equivalent to a product of local stabilizer states, Bell states $\Omega$ between pairs of parties, and a number of $\ket{\mathrm{GHZ}}_{ABE}$ states over the whole tripartite system $A|B|E$. Taking partial trace over $E$:
  \begin{equation}
  \begin{gathered}
    \Tr_E\Omega_{AB} = \Omega_{AB}, \quad \Tr_E\Omega_{AE} = \chi_A, \quad \Tr_E\Omega_{BE} = \chi_B,\\
    \quad \Tr_E\proj{\mathrm{GHZ}}_{ABE} = \Sigma_{AB},
  \end{gathered}
  \end{equation}
  gives the expected result.
\end{proof}

\noindent We choose to refer to \cite[Theorem~5]{Bravyi_2006} because it clarifies the structure of purification. Surely, the proof can be obtained via simpler and more algorithmic methods, in a manner of \cite{Fattal_2004, Audenaert_2005}.

The corollary of Theorem~\ref{thm:bipartite_mixstab_states} is that any Clifford channel is equivalent to a product of stabilizer operations listed in Table~\ref{tab:Choi_states}.
\begin{theorem}[Normal form] \label{thm:normal_form}
  Let $\Phi : \TrClass(\Hil_A) \to \TrClass(\Hil_B)$ be a Clifford channel. Then it is Clifford unitary equivalent to a product of: initial state $\ket{0}$ preparations, chaotic state $\chi$ preparations, erasure channels $\Tr$, identity channels $\Id$ and dephasing channels $\Deph$.
\end{theorem}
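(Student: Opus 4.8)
The plan is to read the normal form directly off the Choi state, using Theorem~\ref{thm:bipartite_mixstab_states} as the main engine. By Theorem~\ref{thm:non-adaptive_characterisation} the Choi state $\sigma_{A'B}$ of a Clifford channel $\Phi$ is a mixed stabilizer state on the bipartite system $A'|B$. First I would apply Theorem~\ref{thm:bipartite_mixstab_states} to obtain local Clifford unitaries $U_{A'}$ and $U_B$ such that, up to a permutation of qubits, $(U_{A'}\otimes U_B)\,\sigma_{A'B}\,(U_{A'}\otimes U_B)^\dag$ is a tensor product of single-qubit chaotic states $\chi$ (on either side), local pure stabilizer states, maximally entangled states $\Omega$ shared between an $A'$ and a $B$ qubit, and maximally correlated states $\Sigma$ shared between an $A'$ and a $B$ qubit.

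Second, I would translate these local Clifford operations on the Choi state into pre- and post-composition of $\Phi$ with Clifford unitary channels. Since $U_B$ acts on the output system, conjugating $\sigma$ by $U_B$ simply post-composes $\Phi$ with the unitary channel $\rho\mapsto U_B\rho U_B^\dag$. Since $U_{A'}$ acts on the reference copy, I would use the transpose (ricochet) identity $(U_{A'}\otimes I_A)\,\Omega_{A'A}\,(U_{A'}\otimes I_A)^\dag=(I_{A'}\otimes U_{A'}^T)\,\Omega_{A'A}\,(I_{A'}\otimes U_{A'}^T)^\dag$ to see that conjugation by $U_{A'}$ is equivalent to pre-composing $\Phi$ with the unitary channel for $U_{A'}^T$. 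As the transpose of a Clifford unitary is again Clifford, both are Clifford unitary channels, so the conjugated state is the Choi state of a channel $\Phi'$ that is Clifford unitary equivalent to $\Phi$.

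Third, I would invoke trace preservation. Because $\Tr_B\sigma=\chi_{A'}$, every $A'$ qubit is maximally mixed once the output is discarded, which forbids any factor that is a pure stabilizer state supported on $A'$. Hence each $A'$ qubit must appear either in a Bell pair or a maximally correlated pair with a $B$ qubit, or as a standalone $\chi$; pure stabilizer factors can sit only on $B$, and up to a further local Clifford on $B$ (absorbed into $U_B$) each may be taken to be $\ket{0}$. Since the Choi map is a linear bijection compatible with the tensor structure, a Choi state factorizing across disjoint groups of $A'$ and $B$ qubits corresponds to a channel factorizing as the tensor product of the channels carried by those groups. Matching each elementary factor against Table~\ref{tab:Choi_states} — $\Omega_{A'B}\!\leftrightarrow\!\Id$, $\Sigma_{A'B}\!\leftrightarrow\!\Deph$, $\chi_{A'}\!\leftrightarrow\!\Tr$, $\chi_B\!\leftrightarrow\!\chi$-preparation, $\proj{0}_B\!\leftrightarrow\!\ket{0}$-preparation — identifies $\Phi'$ as exactly a product of the five claimed elementary channels, the qubit permutation being a Clifford operation absorbed into the encoding/decoding. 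Undoing the Clifford conjugations then exhibits the normal form for $\Phi$.

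The main obstacle I expect lies in the bookkeeping of the third step: confirming that the product structure of the Choi state transfers to a genuine tensor-product factorization of the channel, so that each group of $A'$ and $B$ qubits carries exactly one elementary channel with matching input/output arity, and that the constraint $\Tr_B\sigma=\chi_{A'}$ is used correctly to exclude pure $A'$ factors. The transpose trick of the second step is routine but must be stated precisely to guarantee that the equivalence is realised by Clifford unitaries rather than arbitrary ones.
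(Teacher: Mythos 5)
Your proposal is correct and takes essentially the same route as the paper's own proof: establish via Theorem~\ref{thm:non-adaptive_characterisation} that the Choi state is mixed stabilizer, decompose it with Theorem~\ref{thm:bipartite_mixstab_states}, use $\Tr_B\sigma=\chi_{A'}$ to exclude pure factors on $A'$, pass $U_{A'}$ to the input side via the transpose trick $U_A=U_{A'}^T$, and identify the resulting factors with the elementary channels of Table~\ref{tab:Choi_states}. The only difference is that you spell out the bookkeeping (the ricochet identity and the tensor-factorization of the Choi correspondence) that the paper's terse proof leaves implicit.
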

\begin{proof}
  Consider the Choi state $\sigma\in\Stab_{A'B}$. By Theorem~\ref{thm:non-adaptive_characterization} and having in mind $\Tr_B\sigma = \chi_{A'}$, there exist Clifford unitaries $U_{A'}\in\CliffordGroup_{A'}$ and $U_B\in\CliffordGroup_B$ such that
  \begin{equation}
    (U_{A'}\otimes U_B)\,\sigma\,(U_{A'}\otimes U_B)^\dag = \proj{0}_B^a\otimes\chi_{B}^b\otimes\chi_{A'}^c\otimes\Omega_{A'B}^d\otimes\Sigma_{A'B}^e,
  \end{equation}
  where $a,b,c,d,e\in\Nat$ are some integers. This is the Choi state of a channel $\tilde{\Phi}$ composed of the listed elementary operations. Then $\Phi[\rho] = U_B \tilde{\Phi}[U_A \rho U_A^\dag] U_B^\dag$, where $U_A = U_{A'}^T$.
\end{proof}

\noindent So, any Clifford channel $\Phi$ has a decomposition of form
\begin{equation}
  \begin{quantikz}[wire types={q}, column sep={0.8cm,between origins}, align equals at=1.0]
    &\gate{\Phi} &
  \end{quantikz}
  \quad = \quad
  \begin{quantikz}[wire types={n,q,q,q}, row sep={0.6cm,between origins}, column sep={0.6cm,between origins}, align equals at=2.5]
    &              &[0.2cm]   &                &\lstick{$\ket{0}$} &[0.2cm]\gate[4]{U_B}\setwiretype{q} & \\
    &\gate[3]{U_A} &\ground{} &\setwiretype{n} &\lstick{$\chi$}    &\setwiretype{q}              & \\
    &              &          &                &                   &                             & \\
    &              &          &\gate{\Deph}    &                   &                             & \\
  \end{quantikz}
  .
\end{equation}
This normal form has freedom in permutations of wires and in the choice of $U_A$ and $U_B$, while the numbers $N_\Id$ of identity qubit channels and $N_\Deph$ of qubit dephasing channels are constant. Given such decomposition, we will call Clifford unitary $U_A$ the \emph{encoding gate} and $U_B$ the \emph{decoding gate}.

\begin{remark} \label{rmk:other_research}
  A closely related normal form was independently obtained in \cite{Kliuchnikov_2023}. The authors proved that all stabilizer circuits with affine classical computations and Pauli control can be decomposed into: unitary encoding and decoding gates, one-bit measurements, initial state $\ket{0}$ preparations, preparations of random bits, affine Booleam operations and classical control over Pauli gates. As mentioned in Remark~\ref{rmk:allowed_control}, any such circuit represents a Clifford channel, so the careful analysis of normal form from Theorem~\ref{thm:normal_form} should lead to the result of \cite{Kliuchnikov_2023}.
\end{remark}

\subsection{Information-theoretic properties of Clifford channels}

Suppose we want to solve some data transmission problem using a channel $\Phi$ that connects a sender $A$ to a reciever $B$. Different tasks can be considered, depending on objectives or available resources of the communicating parties. Three most commonly studied tasks are:
\begin{enumerate}
  \item The task of communicating classical bit messages from the sender $A$ to the reciever $B$ by multiple uses of a quantum channel $\Phi$. The rate of communicated bits per channel use can be quantified for any communication protocol, and the least upper bound on communication rates in all possible scenarios is given by \emph{classical capacity} $C(\Phi)$.
  \item The task of communicating classical bit messages from the sender $A$ to the reciever $B$ by using channel $\Phi$ and unlimited shared resource of entanglement between $A$ and $B$. The maximal rate of communication with allowed entanglement is characterised by \emph{entanglement-assisted classical capacity} $C_{ea}(\Phi)$.
  \item The task of communicating multiqubit quantum states (i.e. quantum data) from the sender $A$ to the reciever $B$ by multiple uses of a quantum channel $\Phi$. In this case the \emph{quantum capacity} $Q(\Phi)$ describes the optimal communication rate.
\end{enumerate}
For detailed introduction to the field of quantum Shannon theory and various data transmission protocols consult \cite{Holevo_2019, Wilde_2017}.

Let us show how using the decomposition from Theorem~\ref{thm:normal_form} we can easily solve these problems for a Clifford channel $\Phi$. Suppose that we found a normal form of the channel $\Phi$ with encoding gate $U_A$, decoding gate $U_B$, having $N_{\Id}$ identity channels and $N_{\Deph}$ dephasing channels in the decomposition. The sender and the reciever can perform operations $U_A^\dag$ and $U_B^\dag$ before and after using the channel $\Phi$, so it is sufficient to consider the problem of communicating through a tensor product of qubit discardings, state preparations, identity channels and dephasing channels. State preparations and qubit discardings do not allow any data transmission, so should be omitted. Using one-qubit dephasing channel $\Deph$, one can transmit one bit of classical information by preparing and measuring $Z$-bases state ($C(\Deph)=1$), but it breaks entanglement ($C_{ea}(\Deph)=1$) and destroys quantum data ($Q(\Deph)=0$). Using one-qubit identity channel $\Id$, one can transmit one classical bit ($C(\Id)=1$) or two classical bits using entanglement and superdense coding protocol ($C_{ea}(\Id)=2$), it also transmits one qubit of data ($Q(\Id)=1$). The qubit channels $\Id$ and $\Deph$ are especially simple and their capacities are additive. Thus, the information-theoretic capacities of a Clifford channel $\Phi$ are
\begin{equation}
  C(\Phi) = N_{\Id} + N_{\Deph},\quad C_{ea}(\Phi) = 2 N_{\Id} + N_{\Deph},\quad Q(\Phi) = N_{\Id}.
\end{equation}
Similarly, one can evaluate other interesting capacities of Clifford channels \cite{Holevo_2019, Wilde_2017}. Since $U_A^\dag$ and $U_B^\dag$ are Clifford unitaries, the states for optimal encoding can be chosen to be stabilizer, which proves the stabilizer theory analogue of Gaussian optimizers conjecture \cite{Giovannetti_2014, Holevo_2015, Holevo_2019}. Let us note that the connected results were announced (yet not completed) in \cite{Fattal_2007}.

\section{Stabilizer preserving channels are Clifford} \label{sec:stabilizer_preserving}

In this Section prove that if a channel is stabilizer preserving, then it is also completely stabilizer preserving and Clifford. Additionally, we characterise Clifford channels in terms of their dual channels.

We have given a characterization for Clifford channels in terms of \emph{completely} stabilizer preserving property in Section~\ref{sec:characterization}. Now we want to loosen this condition. For adaptive Clifford channels the ``completely'' condition is necessary \cite{Seddon_2019}. However, in linear optics if a channel is Gaussian-to-Gaussian, then it is also completely Gaussian-to-Gaussian \cite{De_Palma_2015}. By analogy between stabilizer and linear optical circuits (see Appendix~\ref{appendix:analogy}) it is natural to assert that the same holds for Clifford case.

Before proving that all stabilizer preserving channels are Clifford, let us consider two useful Lemmata. They are dual to each other and expain the relation between stabilizer states and Pauli observables.

\begin{lemma} \label{lem:pauli_then_stabilizer}
  A state $\rho\in\TrClass(\Hil^n)$ is mixed stabilizer if and only if $\Tr(\rho P) \in \{-1,0,+1\}$ for all Pauli observables $P\in\PauliGroup^n$.
\end{lemma}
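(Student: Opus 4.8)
The plan is to work in the Pauli basis. Since the non-signed Pauli strings form an orthonormal basis for the Hilbert--Schmidt inner product, any trace-class operator expands as $\rho = \frac{1}{2^n}\sum_{P_0}\Tr(\rho P_0)\,P_0$, the sum running over the $4^n$ unsigned Hermitian Pauli strings $P_0$. The forward implication is then immediate: if $\rho = \frac{1}{2^n}\sum_{P\in\mathcal{S}}P$ for a stabilizer group $\mathcal{S}$, then by orthogonality $\Tr(\rho P_0)$ equals $+1$ if $P_0\in\mathcal{S}$, equals $-1$ if $-P_0\in\mathcal{S}$, and $0$ otherwise; these are the only options because $\pm P_0$ cannot both lie in $\mathcal{S}$ (that would force $-I\in\mathcal{S}$).

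For the converse, suppose $\rho$ is a state with $\Tr(\rho P)\in\{-1,0,+1\}$ for every Pauli observable. I would define the candidate stabilizer group as the set $\mathcal{S}$ of \emph{signed} Pauli observables $s$ with $\Tr(\rho s)=+1$; concretely, for each unsigned $P_0$ with $\Tr(\rho P_0)=\pm 1$ we include the signed string $s = \Tr(\rho P_0)\,P_0$. By construction the Pauli expansion reads $\rho = \frac{1}{2^n}\sum_{s\in\mathcal{S}}s$, so it remains only to prove that $\mathcal{S}$ is a stabilizer group: commutative, closed under multiplication, and not containing $-I$.

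The key step uses positivity of $\rho$, and I expect it to be the crux of the argument; everything else follows cheaply from it. For $s\in\mathcal{S}$ we have $s^2=I$ and $\Tr(\rho s)=1=\Tr\rho$, hence $\Tr\bigl(\rho\,\tfrac12(I-s)\bigr)=0$; since $\rho\geq 0$ and $\tfrac12(I-s)$ is a projector, this forces $\rho$ to be supported in the $+1$-eigenspace of $s$, i.e. $s\rho=\rho s=\rho$. From this single fact: $-I\notin\mathcal{S}$, since $(-I)\rho=-\rho\neq\rho$; $\mathcal{S}$ is commutative, since if $s,t\in\mathcal{S}$ anticommuted then $ts\rho = t\rho = \rho$ while at the same time $ts\rho = -st\rho = -\rho$, a contradiction; and $\mathcal{S}$ is closed, since for commuting $s,t\in\mathcal{S}$ the product $st$ is again a Hermitian Pauli with $(st)\rho=s(t\rho)=\rho$, whence $\Tr(\rho\,st)=\Tr\rho=1$ and therefore $st\in\mathcal{S}$. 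Thus $\mathcal{S}$ is a stabilizer group and $\rho=\frac{1}{2^n}\sum_{s\in\mathcal{S}}s$ is precisely its mixed stabilizer state, completing the converse.
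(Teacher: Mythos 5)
Your proof is correct, and the converse takes a genuinely different route from the paper's. The forward direction coincides with the paper's (uniqueness of the Pauli expansion plus orthogonality), but for the converse the paper argues by induction on the number of qubits: if every non-trivial Pauli expectation vanishes then $\rho = \chi^n$; otherwise it picks $P$ with $\Tr(\rho P)=1$, conjugates by a Clifford unitary taking $P$ to $Z_1$, concludes $U\rho U^\dag = \proj{0}\otimes\rho'$, and applies the induction hypothesis to the $(n-1)$-qubit state $\rho'$. You instead construct the stabilizer group directly, as the set $\mathcal{S}$ of signed Pauli observables $s$ with $\Tr(\rho s)=+1$, and verify the group axioms (commutativity, closure, $-I\notin\mathcal{S}$) from the single positivity consequence $s\rho=\rho s=\rho$. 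It is worth noting that the paper relies on exactly the same positivity fact, but implicitly (in the step asserting that $\Tr(U\rho U^\dag Z_1)=1$ forces $U\rho U^\dag=\proj{0}\otimes\rho'$), whereas you isolate it and make it the crux. What each approach buys: yours is more self-contained and algebraic --- it avoids invoking the (standard, but not proved in the paper) transitivity of Clifford conjugation on non-trivial Pauli observables, and it exhibits the stabilizer group of $\rho$ explicitly as the set of signed Paulis fixing $\rho$; the paper's induction is closer to the algorithmic spirit of the stabilizer formalism and effectively constructs the Clifford normal form $\rho = U^\dag(\proj{0}^k\otimes\chi^{n-k})U$ along the way, which is the form used elsewhere in the paper.
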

\begin{proof}
  Pauli matrices $\PauliGroup^n$ without signs form an orthogonal basis, so any decomposition of states into a sum of Pauli operators is unique. Mixed stabilizer states are decomposed as in Equation~(\ref{eq:mixed_stabilizer}), therefore for any mixed stabilizer state $\rho\in\MixStab^n$ the coefficients $\Tr(P \rho)$ lie in $\{-1,0,+1\}$.

  Let us prove the converse. If for all non-trivial Pauli observables $P$ it holds $\Tr(\rho P) = 0$, then $\rho$ is a chaotic state $\chi^n$. Otherwise, there exists some Pauli observable $P$ such that $\Tr(\rho P) = 1$. We can choose a Clifford unitary $U\in\CliffordGroup^n$ so that $U^\dag P U = Z_1$, thus $\Tr(U\rho U^\dag Z_1) = 1$. That means
  \begin{equation}
    U\rho U^\dag = \proj{0}\otimes\rho',
  \end{equation}
  where $\rho'$ is some ($n-1$)-qubit state, for which the assumption of Lemma still holds. By induction on the number of qubits we conclude that $\rho$ is mixed stabilizer.
\end{proof}

Let us consider one-bit measurement channels $\mathcal{M} : \TrClass(\Hil^n) \to \ell_1(\Int_2)$. By Theorem~\ref{thm:normal_form} every Clifford one-bit measurement channel is either a preparation of mixed stabilzer state or a Pauli measurement. Stabilizer-preserving one-bit measurement channel $\mathcal{M}$ is the channel that maps pure stabilizer states $\PStab^n$ to uniform distributions over affine subspaces of $\Int_2$, i.e.\ deterministic bit states $0$ and $1$ or a uniformly random bit. This channel is described by a pair of effects $\{E_0,E_1\}$. Define $Q = E_0 - E_1$, then $-I\leq Q \leq +I$. Let us show that the channel $\mathcal{M}$ is a Clifford one-bit measurement, i.e.\ $Q$ is either a constant observable $\{-I,0,+I\}$ or non-trivial Pauli observable.

\begin{lemma} \label{lem:stabilizer_then_pauli}
  Let $Q, -I\leq Q\leq I$ be an observable over $\Hil^n$ such that for any pure stabilizer state $\ket{\psi}\in\Stab^n$ it holds $\bra{\psi}Q\ket{\psi}\in\{-1,0,+1\}$. Then $Q$ is either zero or a Pauli observable.
\end{lemma}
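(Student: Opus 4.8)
The plan is to induct on the number of qubits $n$, dualising the proof of Lemma~\ref{lem:pauli_then_stabilizer}. The base case $n=0$ is immediate: a real scalar $q$ with $-1\leq q\leq 1$ and $q\in\{-1,0,1\}$ is either $0$ or $\pm 1$, i.e.\ zero or the trivial Pauli $\pm I^{\otimes 0}$. For the inductive step I isolate the first qubit and expand $Q$ in its Pauli components, $Q = I\otimes A + X\otimes B + Y\otimes C + Z\otimes D$, where $A,B,C,D$ are Hermitian operators on the remaining $n-1$ qubits. The goal is to show that at most one of these four blocks is nonzero and that the surviving block, if any, is a single $(n-1)$-qubit Pauli observable; then $Q$ is $0$ or an $n$-qubit Pauli.

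First I would push the hypotheses down to $n-1$ qubits. Feeding product inputs $\ket{\mu}\otimes\ket{\phi}$, where $\ket{\mu}\in\{\ket{0},\ket{1},\ket{+},\ket{-},\ket{+i},\ket{-i}\}$ is a single-qubit Pauli eigenstate and $\ket{\phi}\in\PStab^{n-1}$, the whole input is again a pure stabilizer state, and evaluating $\bra{\mu}\,\cdot\,\ket{\mu}$ on the first qubit yields $\bra{\phi}(A\pm D)\ket{\phi}$, $\bra{\phi}(A\pm B)\ket{\phi}$, $\bra{\phi}(A\pm C)\ket{\phi}\in\{-1,0,1\}$. Letting $\ket{\phi}$ range over arbitrary unit vectors and using $-I\le Q\le I$ shows in addition that each of the six operators $A\pm B$, $A\pm C$, $A\pm D$ satisfies $-I\le\,\cdot\,\le I$. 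Each therefore meets both hypotheses of the Lemma on $n-1$ qubits, so by the induction hypothesis each of them is $0$ or an $(n-1)$-qubit Pauli observable.

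The rest is a norm count. From $-I\le Q\le I$ we have $\Tr Q^2\le 2^n$; Pauli-block orthogonality gives $\Tr Q^2 = 2\big(\Tr A^2+\Tr B^2+\Tr C^2+\Tr D^2\big)$; and a nonzero $(n-1)$-qubit Pauli $P$ has $\Tr P^2 = 2^{n-1}$. Suppose at least two blocks are nonzero. If $A=0$, then $A\pm B=\pm B$ (and likewise for $C,D$) forces each of $B,C,D$ to be $0$ or Pauli, so two nonzero ones would give $\Tr Q^2\ge 2^{n+1}$, impossible. If $A\ne0$ while some block, say $B$, vanishes, then $A=A+B$ is $0$ or Pauli, hence a single Pauli (as $A\ne0$), which alone saturates $\Tr Q^2\le 2^n$ and forces $B=C=D=0$ — again impossible. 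The only configuration left is $A\ne0$ with all of $B,C,D$ nonzero.

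This last configuration is where I expect the real work. Here, if both $A+D$ and $A-D$ were nonzero Paulis we would get $\Tr A^2+\Tr D^2=2^{n-1}$ and thus $\Tr Q^2 = 2^n + 2(\Tr B^2+\Tr C^2)>2^n$; so one of $A\pm D$ vanishes, giving $D=\pm A$, and symmetrically $B=\pm A$, $C=\pm A$, with $2A$ a nonzero Pauli, i.e.\ $A=\tfrac12 R$ for some Pauli $R$. Then $Q = (s_I I+s_X X+s_Y Y+s_Z Z)\otimes\tfrac12 R$ with all $s\in\{\pm1\}$, and the single-qubit identity $\norm{aI+bX+cY+dZ}=\abs{a}+\sqrt{b^2+c^2+d^2}$ gives $\norm{Q}=\tfrac12(1+\sqrt3)>1$, contradicting $-I\le Q\le I$. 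The decisive point is that the six product-state conditions alone do not exclude $A=B=C=D=\tfrac12 R$ (up to signs); only the operator-norm bound, as opposed to the weaker Hilbert–Schmidt bound $\Tr Q^2\le 2^n$, rules it out. With this case eliminated, at most one block survives and must be a single Pauli, so $Q$ is $0$ or an $n$-qubit Pauli observable, closing the induction.
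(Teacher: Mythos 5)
Your proof is correct, and it takes a genuinely different route from the paper's. The paper proves the lemma without any block decomposition: it first shows that the hypothesis survives amplification to $I^m\otimes Q$, invoking the structure theorem of \cite{Fattal_2004} for bipartite pure stabilizer states together with an eigenvector argument (this is where the paper spends the operator bound $-I\leq Q\leq I$); it then extends the expectation condition to all mixed stabilizer states by purification, and concludes with Cauchy--Schwarz applied to $\chi^n$ and the states $\tfrac{1}{2^n}(I\pm P)$, dually to Lemma~\ref{lem:pauli_then_stabilizer}. Your induction through $Q=I\otimes A+X\otimes B+Y\otimes C+Z\otimes D$ avoids the structure theorem, mixed stabilizer states, and indeed all entangled inputs: since the hypothesis is only ever invoked on tensor products of single-qubit Pauli eigenstates, you in fact prove a formally stronger statement, namely that the expectation condition on \emph{product} stabilizer states together with $-I\leq Q\leq I$ already forces $Q$ to be zero or Pauli. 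Your case analysis is exhaustive (the three configurations with at least two nonzero blocks are all covered), and your diagnosis of the crux is exactly right: $\tfrac{1}{2}(I\pm X\pm Y\pm Z)\otimes R$ passes every product-state test and saturates, without violating, the Hilbert--Schmidt bound $\Tr Q^2\leq 2^n$, so only the operator-norm bound $\norm{Q}\leq 1$ eliminates it --- indeed the one-qubit operator $\tfrac{1}{2}(I+X+Y+Z)$ shows the lemma would be false without the norm hypothesis, so any proof must use it at some decisive point, as the paper does in its eigenvector argument. What the paper's route buys is coherence with its surrounding results: the amplification step mirrors the completely-stabilizer-preserving theme of Theorem~\ref{thm:stabilizer_preserving_are_clifford} and reuses machinery already present in the paper, whereas your argument is elementary, self-contained, and could be stated independently of the rest of the theory.
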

\begin{proof}
  Firstly, we will show that the assumption of Lemma holds for all $I^m\otimes Q$, where $m\in\Nat$ represents some qubit environment. Let us add one-qubit environment $m=1$ and let $\ket{\psi}\in\PStab^{n+1}$ be a pure stabilizer state. The case of separable $\ket{\psi}$ is trivial, so suppose $\ket{\psi}$ is entangled with environment. Using the result of \cite[Theorem~1]{Fattal_2004}, there exists some $U\in\CliffordGroup^n$ such that $(I\otimes U)\ket{\psi} = \ket{\mathrm{Bell}}\ket{0^{n-1}}$. Denote $\tilde{Q} = U^\dag Q U$, then
  \begin{equation}
    \bra{\psi}I\otimes Q\ket{\psi}
    = \Tr^{n+1}(\Omega\otimes\proj{0}^{n-1}\cdot I\otimes\tilde{Q})
    = \Tr^n(\chi\otimes\proj{0}^{n-1} \cdot \tilde{Q}).
  \end{equation}
  The chaotic state $\chi$ can be decomposed in two different ways:
  \begin{equation}
  \begin{aligned}
    \bra{\psi}I\otimes Q\ket{\psi}
    &= \frac{1}{2} \bra{0\,0^{n-1}}\tilde{Q}\ket{0\,0^{n-1}} + \frac{1}{2} \bra{1\,0^{n-1}}\tilde{Q}\ket{1\,0^{n-1}} \\
    &= \frac{1}{2} \bra{+\,0^{n-1}}\tilde{Q}\ket{+\,0^{n-1}} + \frac{1}{2} \bra{-\,0^{n-1}}\tilde{Q}\ket{-\,0^{n-1}}.
  \end{aligned}
  \end{equation}
  By this decomosition we see that
  \begin{equation}
    \bra{\psi}I\otimes Q\ket{\psi} \in \{-1,-\frac{1}{2},0,+\frac{1}{2},+1\}.
  \end{equation}
  Let us prove that it cannot equal $\pm\frac{1}{2}$. Suppose $\bra{0\,0^{n-1}}\tilde{Q}\ket{0\,0^{n-1}} = 1$ and $\bra{1\,0^{n-1}}\tilde{Q}\ket{1\,0^{n-1}} = 0$. The state $\ket{0^n}$ maximises the norm of $\tilde{Q}$, so it is an eigenstate. Similarly, we also see that either $\ket{+\,0^{n-1}}$ or $\ket{-\,0^{n-1}}$ is eigenstate with eigenvalue $1$. But that implies the $1$-eigenspace also contains $\ket{1\,0^{n-1}}$, a contradiction. So, $\bra{\psi}I\otimes Q\ket{\psi}\in\{-1,0,+1\}$. By induction we make the same conclusion for all $m\in\Nat$.

  We conclude, by considering purifications, that $\Tr(\rho Q)\in\{-1,0,+1\}$ for all mixed stabilizer states $\rho\in\Stab^n$. Let us look at the action of $Q$ on chaotic state $\chi^n$. If $\Tr(\chi^n Q) = 1$, then the Cauchy–Schwarz inequality $\abs{\frac{1}{2^n}\Tr(I\, Q)}\leq 1$ is maximized, so $Q=I$. Similarly, if $\Tr(\chi^n Q) = -1$, then $Q=-I$. Next, suppose $\Tr(\chi^n Q) = 0$. In this case
  \begin{equation}
    \Tr\left(\frac{I\pm P}{2^n}Q\right) = \frac{1}{2^n}\Tr(P Q) \in \{-1, 0, +1\}
  \end{equation}
  for all non-trivial Pauli observables $P\in\PauliGroup^n$. If $\frac{1}{2^n}\Tr(P Q) = 0$ for all $P$, then $Q=0$. Otherwise, there exists some non-trivial Pauli observable $P\in\PauliGroup^n$ such that $\frac{1}{2^n}\Tr(P Q) = 1$. That relation maximizes Cauchy–Schwarz inequality, so $Q=P$.
\end{proof}

We can characterise stabilizer preserving property in terms of the action of the dual channel $\Phi^*$ on a Pauli group. That may resemble the definition of quasifree maps between algebras of canonical commutation relations \cite{Petz_1990}.
\begin{theorem} \label{thm:dual_stabilizer_preserving}
  A channel $\Phi : \TrClass(\Hil^n) \to \TrClass(\Hil^m)$ is stabilizer preserving if and only if the dual channel $\Phi^* : \Bounded(\Hil^m) \to \Bounded(\Hil^n)$ preserves Pauli group with zero: for all Pauli observables $P\in\PauliGroup^m$ the observable $\Phi^*[P]$ is Pauli or zero.
\end{theorem}
\begin{proof}
  By Lemma~\ref{lem:pauli_then_stabilizer} the stabilizer preserving condition for $\Phi$ means exactly that for all pure stabilizer states $\ket{\psi}\in\PStab^n$ and Pauli observables $P\in\PauliGroup^m$ it holds $\Tr(\Phi[\proj{\psi}] P) \in \{-1,0,+1\}$. That rewrites to $\bra{\psi}\Phi^*[P]\ket{\psi} \in \{-1,0,+1\}$, which by Lemma~\ref{lem:stabilizer_then_pauli} holds if and only if $\Phi^*[P]$ is a Pauli observable or zero.
\end{proof}

Now, we can easily prove the claim of interest.
\begin{theorem} \label{thm:stabilizer_preserving_are_clifford}
  Let $\Phi : \TrClass(\Hil_A) \to \TrClass(\Hil_B)$ be a stabilizer preserving quantum channel. Then $\Phi$ is a Clifford channel.
\end{theorem}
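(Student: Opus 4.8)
The plan is to show that the Choi state $\sigma_{A'B}$ of $\Phi$ is a mixed stabilizer state; by Theorem~\ref{thm:non-adaptive_characterisation} this is equivalent to $\Phi$ being Clifford, and by Lemma~\ref{lem:pauli_then_stabilizer} it suffices to verify that $\Tr(\sigma P)\in\{-1,0,+1\}$ for every Pauli observable $P\in\PauliGroup_{A'B}$. Since every such $P$ is, up to a global sign, a tensor product $P_{A'}\otimes P_B$ of Pauli strings, I only need to control the coefficients $\Tr[\sigma\,(P_{A'}\otimes P_B)]$, and the global sign never takes us out of $\{-1,0,+1\}$.

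The key idea is to pass to the adjoint channel $\Phi^\ast : \TrClass(\Hil_B)\to\TrClass(\Hil_A)$, which is completely positive and unital. First I would observe that the stabilizer-preserving hypothesis, combined with Lemma~\ref{lem:pauli_then_stabilizer} applied to $\Phi[\proj{\psi}]\in\Stab_B$, gives $\langle\psi|\Phi^\ast(P_B)|\psi\rangle=\Tr(\Phi[\proj{\psi}]\,P_B)\in\{-1,0,+1\}$ for every pure stabilizer state $\ket{\psi}\in\PStab_A$ and every Pauli observable $P_B\in\PauliGroup_B$. Fixing $P_B$ and setting $Q=\Phi^\ast(P_B)$, positivity and unitality of $\Phi^\ast$ ensure $-I\leq Q\leq I$ (since $-I\leq P_B\leq I$). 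Thus $Q$ satisfies the hypothesis of Lemma~\ref{lem:stabilizer_then_pauli}, so $\Phi^\ast(P_B)$ is either zero or a (signed) Pauli observable on $A$. In other words, the adjoint channel maps the Pauli group into the set of signed Pauli strings together with $0$.

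It then remains to convert this Heisenberg-picture statement into the Choi coefficients. Using the adjoint and the ricochet identity for the maximally entangled state, one computes
\begin{equation}
  \Tr[\sigma_{A'B}\,(P_{A'}\otimes P_B)] = \Tr[\Omega_{A'A}\,(P_{A'}\otimes\Phi^\ast(P_B))] = \frac{1}{2^{\abs{A}}}\Tr[P_{A'}^T\,\Phi^\ast(P_B)].
\end{equation}
If $\Phi^\ast(P_B)=0$ this vanishes; otherwise $\Phi^\ast(P_B)$ is a signed Pauli string, $P_{A'}^T$ is again a signed Pauli string, and orthonormality of Pauli strings under the normalized Hilbert--Schmidt product forces the right-hand side into $\{-1,0,+1\}$. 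By Lemma~\ref{lem:pauli_then_stabilizer} the Choi state $\sigma$ is therefore mixed stabilizer, and Theorem~\ref{thm:non-adaptive_characterisation} concludes that $\Phi$ is Clifford.

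I expect the genuinely substantive content to sit entirely in the two preparatory Lemmata, especially Lemma~\ref{lem:stabilizer_then_pauli}; given those, the main obstacle here is bookkeeping rather than any new idea. The one point requiring care is the reduction to the adjoint picture together with the verification that $\Phi^\ast(P_B)$ genuinely satisfies the operator bound $-I\leq\Phi^\ast(P_B)\leq I$ needed to invoke Lemma~\ref{lem:stabilizer_then_pauli}, and keeping track of transposition signs (e.g.\ $Y^T=-Y$) when relating $P_{A'}$ on the reference system to Pauli operators on $A$.
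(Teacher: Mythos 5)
Your proof is correct and takes essentially the same route as the paper's: both reduce the claim to showing $\Tr\bigl[\sigma\,(P_{A'}\otimes P_B)\bigr]\in\{-1,0,+1\}$, establish this by applying Lemma~\ref{lem:stabilizer_then_pauli} to the effective observable $Q=\Phi^\ast(P_B)$ (which the paper packages instead as the stabilizer-preserving one-bit measurement channel $\mathcal{M}_B\circ\Phi$), and then invoke Lemma~\ref{lem:pauli_then_stabilizer} together with Theorem~\ref{thm:non-adaptive_characterisation} on the Choi state. Your Heisenberg-picture phrasing with the explicit ricochet identity and Pauli orthonormality is simply a more explicit rendering of the paper's measurement-and-XOR argument, and your verification that $-I\leq\Phi^\ast(P_B)\leq I$ correctly supplies the hypothesis of Lemma~\ref{lem:stabilizer_then_pauli} that the paper leaves implicit.
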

\begin{proof}
  We want to prove that if $\Phi$ is stabilizer preserving, then $\Id_E \otimes \Phi$ is also stabilizer preserving for any qubit environment $E$. Fix some Pauli observable $P_{EB}\in\PauliGroup_{EB}$ and represent it as a product $P_{EB} = P_E\otimes P_B$ of Pauli observables $P_E\in\PauliGroup_E$ and $P_B\in\PauliGroup_B$, then $\Id_E\otimes\Phi^*[P_{EB}] = P_E\otimes\Phi^*[P_B]$. By Theorem~\ref{thm:dual_stabilizer_preserving}, the observable $\Phi^*[P_B]$ is zero or Pauli, so $P_E\otimes\Phi^*[P_B]$ is also either a Pauli observable or zero. That means $\Id_E \otimes \Phi$ is stabilizer preserving.
\end{proof}

We conclude that in Theorem~\ref{thm:non-adaptive_characterization} the usual stabilizer preserving property is sufficient. Let us formulate the result for some interesting cases.

\begin{corollary} \label{cor:state_preparation_channels}
  If $\ell_1(\Int_2^n)\to\TrClass(\Hil^m)$ is a state preparation channel that maps uniform distributions over affine subspaces of $\Int_2^n$ to mixed stabilizer states, then it corresponds to stabilizer circuit preparation of mixed stabilizer states.
\end{corollary}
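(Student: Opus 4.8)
The plan is to reduce the statement to Theorem~\ref{thm:stabilizer_preserving_are_clifford} by turning the classical-input channel into a genuine quantum channel. Identifying $\ell_1(\Int_2^n)$ with the diagonal operators in $\TrClass(\Hil^n)$, I would set $\tilde\Phi = \Phi\circ\Deph^n : \TrClass(\Hil^n)\to\TrClass(\Hil^m)$, where $\Deph^n$ is the full dephasing channel projecting an arbitrary state onto its computational-basis diagonal. It then suffices to prove that $\tilde\Phi$ is stabilizer preserving: by Theorem~\ref{thm:stabilizer_preserving_are_clifford} this makes $\tilde\Phi$ a Clifford channel, and since the classical embedding $\ell_1(\Int_2^n)\hookrightarrow\TrClass(\Hil^n)$ is itself Clifford, restricting $\tilde\Phi$ to classical inputs shows $\Phi$ is Clifford as well.

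The elementary fact driving the reduction is that the diagonal of a pure stabilizer state is a classical stabilizer state. For $\ket\psi\in\PStab^n$ with quadratic form expansion supported on an affine subspace $L\subseteq\Int_2^n$, the Born probabilities $\abs{\braket{x}{\psi}}^2$ equal $1/\abs{L}$ on $L$ and vanish elsewhere, so $\Deph^n[\proj\psi]$ is exactly the uniform distribution over $L$. By hypothesis $\Phi$ maps such a distribution to a mixed stabilizer state, hence $\tilde\Phi[\proj\psi]=\Phi[\Deph^n[\proj\psi]]\in\Stab^m$. Thus $\tilde\Phi$ is stabilizer preserving, and the reduction closes.

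It remains to identify the preparation structure, namely that the prepared states $\rho_x=\Phi[\proj x]$ are mutually \emph{similar} mixed stabilizer states. Here I would use the normal form of $\tilde\Phi$ from Theorem~\ref{thm:normal_form}, writing $\tilde\Phi[\cdot]=U_B\,\Phi_0[U_A(\cdot)U_A^\dag]\,U_B^\dag$ with $U_A,U_B$ Clifford and $\Phi_0$ a product of $\ket0$- and $\chi$-preparations, erasures $\Tr$, identities $\Id$ and dephasings $\Deph$. Writing $\proj x=X^x\proj{0^n}(X^x)^\dag$ and pushing the Pauli $X^x$ through the circuit, $U_A X^x U_A^\dag=\pm P_x$ is again Pauli; each elementary factor of $\Phi_0$ is Pauli-covariant (identity exactly, erasure trivially, preparations have no input, and dephasing intertwines Pauli conjugation after absorbing the $Z$-component on the dephased qubits), so $\Phi_0[P_x\xi P_x^\dag]=P_x'\,\Phi_0[\xi]\,(P_x')^\dag$ for some Pauli $P_x'$; finally $U_B$ conjugates $P_x'$ into a Pauli $Q_x$. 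Hence $\rho_x=Q_x\rho_0 Q_x^\dag$ with $\rho_0=\Phi[\proj{0^n}]$, so all $\rho_x$ are similar and $\Phi[p]=\sum_x p(x)\,Q_x\rho_0 Q_x^\dag$ is a Clifford preparation of similar mixed stabilizer states.

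The main obstacle I expect is this last step: carefully verifying Pauli-covariance of the elementary factor $\Phi_0$, in particular that conjugation by a Pauli commutes with dephasing once its $Z$-component has been discarded, and tracking signs so that the assignment $x\mapsto Q_x$ is well defined up to the irrelevant overall Pauli signs. The reduction in the first two paragraphs is comparatively routine, given Theorem~\ref{thm:stabilizer_preserving_are_clifford} and the quadratic form expansion of stabilizer states.
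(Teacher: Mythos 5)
Your proof is correct and follows exactly the route the paper intends: the corollary is stated as a direct consequence of Theorem~\ref{thm:stabilizer_preserving_are_clifford}, applied after identifying classical inputs with diagonal operators (your $\Phi\circ\Deph^n$ reduction, with the quadratic form expansion giving stabilizer preservation), and the ``similar states'' structure is then read off from the normal form of Theorem~\ref{thm:normal_form} by pushing Paulis through the circuit. The obstacle you anticipate in the last step is not actually one: $\Deph[P\rho P^\dag]=P\Deph[\rho]P^\dag$ holds exactly for every Pauli $P$, since $Z$ and $P$ commute up to a sign that cancels under conjugation, so no absorption of $Z$-components or sign bookkeeping is needed.
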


\begin{corollary} \label{cor:measurement_channels}
 If $\TrClass(\Hil^n)\to\ell_1(\Int_2^m)$ is a $m$-bit measurement channel that maps pure stabilizer states to uniform distributions over affine subspaces, then it outputs the results of Pauli measurements, with possible post-processing by affine Boolean maps and random bit inputs.
\end{corollary}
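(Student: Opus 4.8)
The plan is to bring $\mathcal{M}$ into the normal form of Theorem~\ref{thm:normal_form} and then read off its action in the Heisenberg picture. First I would note that a uniform distribution over an affine subspace of $\Int_2^m$, regarded as a diagonal operator in $\TrClass(\Hil^m)$, is exactly a classical mixed stabilizer state. Thus the hypothesis says precisely that $\mathcal{M}$ sends pure stabilizer states to mixed stabilizer states, i.e.\ $\mathcal{M}$ is stabilizer preserving. Theorem~\ref{thm:stabilizer_preserving_are_clifford} then makes it Clifford, and Theorem~\ref{thm:normal_form} gives $\mathcal{M}[\rho]=U_B\,\tilde{\Phi}\big[U_A\rho U_A^\dag\big]U_B^\dag$ with $\tilde{\Phi}$ a tensor product of elementary operations ($\ket{0}$- and $\chi$-preparations, erasures, identities, dephasings) whose output is read out in the computational basis.

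Next I would compute the POVM $\{E_y\}$ of $\mathcal{M}$ on the input. Expanding $\proj{y}=2^{-m}\sum_{a\in\Int_2^m}(-1)^{a\cdot y}Z^a$, where $a\cdot y=\sum_j a_jy_j$ and $Z^a=\prod_j Z_j^{a_j}$, and conjugating by $U_B$ produces signed Paulis $S^a=U_B^\dag Z^a U_B$ on the output register. Pulling these back through the unital adjoint $\tilde{\Phi}^*$ and through $U_A$ gives $E_y=2^{-m}\sum_a(-1)^{a\cdot y}\Pi_a$ with $\Pi_a=U_A^\dag\,\tilde{\Phi}^*(S^a)\,U_A$. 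The decisive elementary fact is single-qubit: the adjoints of the elementary channels send single-qubit Paulis either to $0$ or to a signed Pauli. The $\ket{0}$- and $\chi$-preparations annihilate $X$ and $Y$ (and $\chi$ also annihilates $Z$), dephasing annihilates $X$ and $Y$, and the identity is transparent. As $\tilde{\Phi}$ is a tensor product and each $S^a$ is a genuine Pauli, each $\Pi_a$ is either $0$ or a signed Pauli observable on $A$.

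The crucial structural step is the analysis of $G=\{a:\Pi_a\neq0\}$, where I would show that $G$ is a linear subspace of $\Int_2^m$ and $a\mapsto\Pi_a$ a homomorphism onto a commuting family of signed Pauli observables. This rests on the factorwise multiplicativity $\tilde{\Phi}^*(S^aS^b)=\tilde{\Phi}^*(S^a)\,\tilde{\Phi}^*(S^b)$, which I expect to be the main obstacle, since a channel adjoint is linear but not multiplicative in general (for instance, on a qubit prepared in $\ket{0}$ it annihilates $X$ and $Y$ yet not $XY\propto Z$). The resolution is that this obstruction never arises on the surviving sector: for $a,b\in G$ the local factors of $S^a$ and $S^b$ on every prepared or dephased qubit already lie in $\{I,Z\}$ (otherwise $\Pi_a$ or $\Pi_b$ would vanish), and $\{I,Z\}$ is closed under multiplication, so the only troublesome products are excluded and multiplicativity holds qubit by qubit. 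Hence $\Pi_a\Pi_b=\Pi_{a+b}$, the set $G$ is closed under addition, $\Pi_a^2=\Pi_0=I$, and all $\Pi_a$ pairwise commute.

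Finally I would translate this into the asserted operational form. The commuting signed Pauli observables $\{\Pi_a\}_{a\in G}$ realise a joint Pauli measurement yielding $k=\dim G$ bits, their signs providing affine offsets (NOT gates); for $a\notin G$ one has $\Pi_a=0$, so the $m-k$ complementary parities are uniformly random and mutually independent. Reading $E_y=2^{-m}\sum_{a\in G}(-1)^{a\cdot y}\Pi_a$ shows that $y$ is recovered from the $k$ Pauli outcomes together with $m-k$ fresh random bits by an affine Boolean map, which is exactly the claim.
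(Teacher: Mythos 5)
Your proposal is correct and takes the route the paper intends for this corollary: uniform distributions over affine subspaces are exactly the classical mixed stabilizer states, so the hypothesis says the channel is stabilizer preserving, Theorem~\ref{thm:stabilizer_preserving_are_clifford} makes it Clifford, and Theorem~\ref{thm:normal_form} supplies the normal form. The paper states the corollary without further argument and leaves the operational reading of the normal form implicit, whereas your Heisenberg-picture POVM computation --- in particular the observation that the adjoint $\tilde{\Phi}^*$, though not multiplicative in general, is multiplicative on the surviving sector because local factors on prepared and dephased qubits lie in $\{I,Z\}$ --- fills in that reading correctly, yielding the homomorphism $a\mapsto\Pi_a$ on the subspace $G$ and hence the stated structure of Pauli measurements processed by affine Boolean maps with random bit inputs.
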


\begin{corollary} \label{cor:classical_channels}
 If $\ell_1(\Int_2^n)\to\ell_1(\Int_2^m)$ is classical channel that maps uniform distributions over affine subspaces of $\Int_2^n$ to uniform distributions over affine subspaces of $\Int_2^m$, then it can be represented by some affine Boolean circuit.
\end{corollary}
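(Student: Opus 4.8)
The plan is to reduce the statement to Corollary~\ref{cor:measurement_channels} by lifting the classical channel to a quantum measurement channel and then exploiting the fact that a classical input is already fully dephased. Write $\pi_n : \TrClass(\Hil^n)\to\ell_1(\Int_2^n)$ for the diagonal read-out (full dephasing $\Deph^{\otimes n}$ followed by identifying a diagonal operator with its distribution of eigenvalues) and $\iota_n : \ell_1(\Int_2^n)\to\TrClass(\Hil^n)$ for the embedding of a distribution as a diagonal density operator, so that $\pi_n\circ\iota_n$ is the identity on $\ell_1(\Int_2^n)$. Given $\Lambda : \ell_1(\Int_2^n)\to\ell_1(\Int_2^m)$, I would study the $m$-bit measurement channel $\mathcal{M}=\Lambda\circ\pi_n : \TrClass(\Hil^n)\to\ell_1(\Int_2^m)$ and recover $\Lambda=\mathcal{M}\circ\iota_n$ at the end.

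First I would verify that $\mathcal{M}$ satisfies the hypothesis of Corollary~\ref{cor:measurement_channels}. By the quadratic form expansion, any pure stabilizer state $\ket{\psi}\in\PStab^n$ has $\abs{\braket{x}{\psi}}^2$ constant on an affine subspace $L\subseteq\Int_2^n$ and zero elsewhere, so $\pi_n[\proj{\psi}]$ is the uniform distribution over $L$. By assumption $\Lambda$ sends this to a uniform distribution over an affine subspace of $\Int_2^m$. Hence $\mathcal{M}$ maps every pure stabilizer state to a uniform distribution over an affine subspace, and Corollary~\ref{cor:measurement_channels} applies: $\mathcal{M}$ is realised by measuring some Pauli observables $P_1,\dots,P_r$ and post-processing the outcomes by an affine Boolean map with auxiliary uniformly random bits.

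The main step is to show that, on a classical input, these Pauli measurements degenerate into affine Boolean operations. Since $\pi_n$ already dephases, $\mathcal{M}=\mathcal{M}\circ\Deph^{\otimes n}$, so each $P_i$ is effectively measured on a diagonal state $D$. Because $\Tr(D P)=0$ whenever $D$ is diagonal and $P$ contains an $X$ or $Y$ factor, while $\Tr\bigl(D\prod_{j\in S}Z_j\bigr)$ returns the parity $\bigoplus_{j\in S}x_j$ of the input bits, each measurement either outputs an $\Int_2$-linear function of the input (a parity) or, for a non-diagonal Pauli, an independent fair coin. Both are affine Boolean operations, and composing them with the subsequent affine Boolean post-processing leaves the whole map affine Boolean. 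Restricting to classical inputs through $\Lambda=\mathcal{M}\circ\iota_n$ then exhibits $\Lambda$ as an affine Boolean circuit, as claimed.

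I expect the last step to be the main obstacle: one must argue rigorously that on a dephased input a general Pauli measurement is operationally a parity read-out or a fresh fair coin, and that folding these into the affine post-processing really yields an affine circuit rather than smuggling in nonlinear behaviour. The remaining parts are routine bookkeeping with the embeddings $\pi_n,\iota_n$ and a direct appeal to Corollary~\ref{cor:measurement_channels}; alternatively one could run the same argument through Theorem~\ref{thm:stabilizer_preserving_are_clifford} applied to the quantum extension $\iota_m\circ\Lambda\circ\pi_n$, but invoking the measurement corollary is more direct.
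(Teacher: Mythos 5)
Your overall strategy --- lift $\Lambda$ to the measurement channel $\mathcal{M}=\Lambda\circ\pi_n$, apply Corollary~\ref{cor:measurement_channels}, then specialize to diagonal inputs --- is sound, and it is close in spirit to the paper's route (the paper treats the corollary as a direct instance of Theorem~\ref{thm:stabilizer_preserving_are_clifford} together with the normal form of Theorem~\ref{thm:normal_form}; your observation that $\pi_n$ sends $\PStab^n$ to uniform distributions over affine subspaces is exactly the right glue). The gap is in your key step. The Pauli observables produced by Corollary~\ref{cor:measurement_channels} are \emph{jointly measured commuting} observables (in the normal form they arise as $U_A^\dag Z_i U_A$ for the encoding Clifford $U_A$), and your claim that each non-diagonal one behaves on a classical input as an \emph{independent} fair coin is false: the outcomes are only marginally fair, and their correlations carry parities of the input. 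Concretely, $P_1=X_1$ and $P_2=X_1Z_2$ commute; on the input $\proj{x_1x_2}$ each outcome bit $\beta_1,\beta_2$ is a fair coin, yet $\beta_1\oplus\beta_2=x_2$ deterministically. If you replace the joint measurement by two independent coins, the (perfectly legal) affine post-processing $(\beta_1,\beta_2)\mapsto\beta_1\oplus\beta_2$ outputs a fair coin instead of $x_2$ --- a different channel. So the per-observable ``parity or fresh coin'' reduction does not reproduce $\mathcal{M}\circ\iota_n$, and the error propagates through the final composition step.

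The repair keeps your conclusion but must go through the group generated by the measured observables rather than through them one at a time. For $v\in\Int_2^r$ set $P_v=\prod_i P_i^{v_i}$; since the $P_i$ commute, each $P_v$ is (up to sign) a Pauli observable, and $H=\{v : P_v\ \text{is diagonal}\}$ is a linear subspace of $\Int_2^r$. On input $\proj{x}$ one has $\Tr(\proj{x}\,P_v)=0$ for $v\notin H$, while for $v\in H$ it equals $(-1)^{\ell_v(x)}$ with $\ell_v$ affine in $x$; hence the outcome string $\beta\in\Int_2^r$ is uniformly distributed over the affine coset $\{\beta : \bigoplus_i v_i\beta_i=\ell_v(x)\ \text{for all}\ v\in H\}$. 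A family of distributions of this form, with offset affine in $x$, is exactly what an affine Boolean circuit acting on $x$ together with fresh uniformly random ancilla bits produces; composing it with the affine post-processing from Corollary~\ref{cor:measurement_channels} and restricting via $\iota_n$ stays affine. With this joint-distribution argument in place of the independence claim, your proof goes through.
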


Finally, let us comment on the structure of channels that send pure stabilizer states to pure stabilizer states. Except for Clifford isometry channels, there exists an example of \emph{state reset channel} $\rho \mapsto \proj{\psi}\Tr(\rho)$ that discards all qubits and prepares a pure stabilizer state $\ket{\psi}\in\PStab_B$. By Theorem~\ref{thm:stabilizer_preserving_are_clifford} it follows there are no other options.

\section{Conclusion and outlook} \label{sec:conlusion}

We have shown that the notion of a Clifford channel can be naturally defined in a number of equivalent ways: in terms of (complete) stabilizer preserving property, in terms of the stabilizerness of Choi state, in terms of the dual channel action on Pauli group, in terms of Clifford dilation or in terms of realization via non-adaptive stabilizer circuits. The channel is Clifford isometry if it is completely pure stabilizer preserving, or if its Choi state is pure stabilizer, or if it is represented by an isometry stabilizer circuit. Complete stabilizer preserving property follows already from the usual stabilizer preserving property. If the channel maps pure stabilizer states to pure stabilizer states, it may be either a state reset channel or a Clifford isometry. Clifford channels have particularly simple structure: up to Clifford unitary encoding and decoding operations, they can be represented as a product of qubit discardings, initial state $\ket{0}$ and chaotic state $\chi$ preparations, identity channels and dephasing channels. That leads to their simple information-theoretic structure: channel capacities depend only on the number of identity and dephasing channels in the normal form.

There is a number of directions the work can be generalised. First of all, one can prove the statements for qudit systems of local dimensions $d>2$ \cite{Beaudrap_2013, Gheorghiu_2014}. Some job in this direction was done in \cite{Looi_2011}. Our proofs seem to mainly rely on dimension-independent aspects of stabilizer theory, so for all prime local dimensions $d$ they are easily replicated. For composite $d$ the structure of Clifford channels is more sophisticated.

Next, one can further study adaptive Clifford channels. It would be interesting to better understand the distinction between completely adaptive stabilizer preserving channels and adaptive Clifford channels \cite{Heimendahl_2022}. It would then be interesting to characterise extreme points in the space of adaptive Clifford channels. The results of \cite[Theorem~4]{Heimendahl_2022} show that such points can be represented by a sequence of adaptive Pauli measurements followed by Clifford channels depending on measurement outcomes. Using the deferred measurement principle on such realization we can conclude that extreme adaptive Clifford channels lie in finite levels of Clifford hierarchy. As suggested in Section~\ref{sec:characterization}, it may be useful to study higher Clifford hierarchy channels, even though the structure of unitary Clifford hierarchy is underexplored \cite{Beigi_2010, Cui_2017, Rengaswamy_2019, Pllaha_2020, De_Silva_2025}.

We expect that adaptive Clifford channels should differ from classical channels exactly by the possibility to use the resource of entanglement, but not in other resources. Following this expectation, we conjecture that adaptive Clifford channels should satisfy some information-theoretic conjectures, such as additivity of classical capacity. Arbitrary quantum channels can be obtained from adaptive Clifford channels by applying the resource of magic \cite{Seddon_2019}. It would be interesting to assess the effect of magic on information transmission (see related work \cite{Bu_2024, Oliviero_2022, Cao_2024}). Low-magic channels should be understood as close-to-Clifford, so possibly one could apply some results of quantum Boolean analysis \cite{Montanaro_2008, Rouz_2022}.

Finally, let us note that the operational interpretation of Clifford channels may have applications to resource theories \cite{Chitambar_2019, Gour_2024}. By analogy to the non-convex resource theory of non-Gaussianity \cite{Albarelli_2018, Lami_2018}, one can define a resource theory of non-stabilizerness with mixed stabilizer states as free states and Clifford channels as free operations. Such resource theory is different from the convex resource theory of magic \cite{Veitch_2012, Seddon_2019}, which is analogous to the theory of Wigner negativity in Gaussian case. For classical systems, Clifford operations are affine Boolean maps with possible uniformly random bit inputs, so the non-stabilizerness can be seen as the amount of multiplicativity. It may therefore have a direct relation to the multiplication cost of functions, often studied in algebraic complexity.

%----------------------------------------------------------------------------------------
%  ACKNOWLEDGEMENTS
%----------------------------------------------------------------------------------------
\section*{Acknowledgements}
The authors are grateful to A.S.~Holevo, G.G.~Amosov and E.O.~Kiktenko for useful comments. This work was supported by the Russian Science Foundation under grant no. 24-11-00145, \url{https://rscf.ru/project/24-11-00145/}.

%----------------------------------------------------------------------------------------
%  BIBLIOGRAPHY
%----------------------------------------------------------------------------------------
\nocite{Kay_2023} % citation to quantikz package
\bibliography{bibliography}

%----------------------------------------------------------------------------------------
%  APPENDICES
%----------------------------------------------------------------------------------------
% \newpage
\appendix
\section{The analogy between stabilizer and Gaussian theory} \label{appendix:analogy}

In this Appendix we formulate an analogy between theories of bosonic and stabilizer circuits. We give it in Table~\ref{tab:analogy} with three columns corresponding to stabilizer theories over qubit circuits \cite{Gottesman_1997}, odd-dimensional qudit circuits \cite{Gottesman_1999_2} and linear bosonic circuits \cite{Weedbrook_2012}. One might as well consider CSS-preserving rebit circuits theory \cite{Delfosse_2015} or fermionic linear optics \cite{Terhal_2002, Jozsa_2008, Reardon_Smith_2023, Dias_2023}. Some of important concepts in Gaussian theory can only be understood approximately or in terms of Shwartz distributions, we write ``approx.'' in this case. We do not aim this list to be exhaustive, and the definitions are found across the literature.

\begin{table*}[h]
\caption{The analogy between concepts of qubit and qudit stabilizer theories and bosonic linear optics.}
\label{tab:analogy}
\begin{ruledtabular}
\begin{tabular}{ccc}
  Qubit stabilizer theory & Qudit stabilizer theory & Bosonic linear optics theory \\
  \hline
  $n$ qubits & $n$ qudits & $n$ bosonic modes \\
  configuration space $\Int_2^n$ & configuration space $\Int_d^n$ & configuration space $\Real^n$ \\
  phase space $\Int_2^{2n}$ & phase space $\Int_d^{2n}$ & phase space $\Real^{2n}$ \\
  phase factor $-1$ & phase factor $\omega = e^{i\,2\pi/d}$ & phase factor $e^{i\,2\pi t}$ \\
  $Z$-basis $\{\ket{0},\ket{1}\}$ & position states $\{\ket{0},\dots,\ket{d-1}\}$ & Schr{\"o}dinger position representation \\
  $X$-basis $\{\ket{+},\ket{-}\}$ & momentum states $\{\frac{1}{\sqrt{d}}\sum_{j=0}^{d-1} \omega^{k j}\ket{j}\}_k$ & Schr{\"o}dinger momentum representation \\
  Pauli group & Heisenberg-Weyl group & Heisenberg group \\
  Pauli operators & Heisenberg-Weyl operators & displacement operators \\
  $X$ gate & shift operator $X\ket{j} = \ket{j+1}$ & position shift $e^{-i q \hat{p}}$ \\
  $Z$ gate & clock operator $Z\ket{j} = \omega^j\ket{j}$ & momentum shift $e^{i p \hat{q}}$ \\
  $ZX = -XZ$ & $ZX = \omega XZ$ & Weyl commutation relations \\
  Clifford group & Clifford group & affine symplectic group \\
  Hadamard gate $H$ & Fourier transform $F\ket{k} = \frac{1}{\sqrt{d}}\sum_{j=0}^{d-1} \omega^{k j}\ket{j}$ & Fourier transform \\
  none \cite{Schmid_2022} (or \cite{Wootters_1987}) & discrete Wigner function \cite{Gross_2006} & Wigner function \cite{Wigner_1932} \\
  pure stabilizer states & pure stabilizer states & pure Gaussian states \\
  mixed stabilizer states & mixed stabilizer states & mixed Gaussian states \\
  Bell state $\ket{\mathrm{Bell}}$ & maximally entangled state & approx. maximally entangled state \\
  Choi state & Choi state & approx. Choi state \\
  $Z$-measurement & $Z$-measurement & homodyne position measurement \\
  Clifford channels & Clifford channels & Gaussian channels \\
  classical bit $\Int_2$ & classical dit $\Int_d$ & classical variable $\Real$ \\
  Boolean operations & polynomial operations & polynomial operations \\
  affine Boolean maps & affine maps & affine maps \\
  classical Pauli control & classical Heisenberg-Weyl control & controlled displacements
\end{tabular}
\end{ruledtabular}
\end{table*}

\end{document}